\documentclass[11pt]{article}

\usepackage[margin=1in]{geometry}
\usepackage{amsthm,amsmath,amsfonts,amssymb,mathrsfs}
\usepackage{graphicx}
\usepackage{tikz-cd}
\usepackage{multicol}
\usepackage{algorithm}
\usepackage{algorithmic}
\usepackage{subcaption}
\usepackage{rotating}
\usepackage{bbm}
\usepackage{booktabs}
\usepackage{xcolor}
\usepackage[numbers]{natbib}
\usepackage[colorlinks,citecolor=blue,urlcolor=blue]{hyperref}

\newcommand{\bbE}{\mathbb E}
\newcommand{\bbI}{\mathbb I }
\newcommand{\bbR}{\mathbb R}
\newcommand{\rev}[1]{\textcolor{black}{#1}}
\newcommand{\typo}[1]{#1}

\theoremstyle{plain}
\newtheorem{theorem}{Theorem}[section]
\newtheorem{lemma}[theorem]{Lemma}

\theoremstyle{definition}

\newtheorem{remark}[theorem]{Remark}

\title{Optimal scaling of MCMC algorithms: \color{black}the Hamiltonian approach}
\author{
\begin{tabular}{c}
P. Dobson\\
\small Maxwell Institute for Mathematical Sciences and Mathematics Department\\
\small Heriot-Watt University, Edinburgh, EH14 4AS, UK
\end{tabular}
\and
\begin{tabular}{c}
J. M. Sanz-Serna\\
\small Departamento de Matemáticas, Universidad Carlos III de Madrid\\
\small Avenida Universidad 30, 28911 Leganés, Madrid
\end{tabular}
\and
\begin{tabular}{c}
K. C. Zygalakis\\
\small Maxwell Institute for Mathematical Sciences and School of Mathematics\\
\small University of Edinburgh, Peter Guthrie Tait Rd, EH9 3FD, Edinburgh
\end{tabular}
}
\date{}

\begin{document}

\maketitle

\begin{abstract}
We present a simple, yet general \rev{approach} to \typo{study the scaling properties as the dimensionality of Metropolised MCMC sampling algorithms increases}. The study relies {\color{black}on the symmetries of the Hamiltonian formalism and} ultimately on \rev{the symmetry of the Metropolis-Hastings formula}. \rev{Our findings} contain, as particular cases, many known results for the Random Walk Metropolis, MALA  and other algorithms. In addition, they \typo{provide, in an easy way, new optimal scaling results} for a variety of proposal mechanisms, including implicit proposals and proposals generated with the help of differential equation integrators. The \rev{analysis} applies to targets that are products of a given, not necessarily univariate distribution, and also to cases where the different terms in the product are scaled differently. We show how to construct \rev{gradient-based MALA-like} proposals where the variance of the proposal as the dimension \(d\) increases may be taken as \(\mathcal{O}(1/d^\mu)\), with \(\mu>0\) arbitrarily small, to be compared with the values \(\mu = 1\) for Random Walk Metropolis and \(\mu=1/3\) for MALA.
\end{abstract}

\noindent\textbf{MSC 2020 subject classifications:} Primary 60J22; secondary 65C05.

\medskip
\noindent\textbf{Keywords:} Markov chain Monte Carlo; Metropolis--Hastings; optimal scaling; Hamiltonian dynamics; Langevin algorithms.

\section{Introduction}
This paper presents a \rev{unified} approach to the investigation of the scaling properties of different MCMC sampling algorithms. Hundreds or perhaps thousands of MCMC algorithms based on Metropolisation have been suggested to sample from probability densities
\begin{equation}\label{eq:target}
\pi(q) \propto \exp(-V(q)), \qquad q\in\bbR^m.
\end{equation}
\rev{Two of the  best known proposals} are given by the Random Walk Metropolis (RWM) formula
\begin{equation}\label{eq:RWMproposal}
q^\star = q+\sqrt{\delta} p,\qquad p\sim N(0,I_m),
\end{equation}
and the MALA formula \cite{RT96}
\begin{equation}\label{eq:malaproposal}
q^\star = q - \frac{\delta}{2} \nabla V(q)+ \sqrt{\delta} p, \qquad p \sim N(0,I_m).
\end{equation}
The latter provides a consistent discretisation of the Langevin equation
\begin{equation}\label{eq:langevin}
dq(t) = -\frac{1}{2}\nabla V(q(t))\,dt+\, dW(t),
\end{equation}
which preserves the target \eqref{eq:target}.

A criterion to choose between all the \typo{different} algorithms \rev{is to study} how their parameters have to be varied as the dimensionality of the target increases. For targets consisting of \(d\) independent copies of a univariate distribution, it was proved in the pioneering contribution \cite{RGG97} that the variance of the RWM proposal should be scaled as \(\delta_d= \ell^2/d\), with a constant \(\ell\), to ensure that, as  \(d\rightarrow \infty\), the acceptance probability converges to a nontrivial limit not equal to zero or one. For MALA \cite{RR98}, the scaling is less demanding: \(\delta_d= \ell^2/d^{1/3}\). \rev{The best result available for a MALA-like sampler  corresponds to fMALA \cite{DRVZ17} where $\delta_{d}=\ell^2/d^{1/5}$, but this comes at the cost of calculating higher order derivatives of the potential \(V\)}. These studies also identify \emph{optimal} values of the acceptance probability, so that the algorithms are most efficient if \(\ell\) is tuned to achieve those optimal values.  It is \rev{remarkable that}, at least in the product of identical copies scenario, the optimal values of the acceptance probability may be proved to be independent of the target. The literature on optimal scaling is by now substantial; a selection of useful references include \cite{RR01,BS09,BPRSSS13,DRVZ17,B07,pillai2024optimal,PST12,NR06,YRR20,ZBK17,SR09,Bedard08,JLM14,JLM15} among others.

In this paper \typo{we provide} a simple, yet general approach to deriving scaling results that, on the one hand, makes it possible to recover the results in \cite{RGG97, RR98, BS09, DRVZ17}  and, on the other, provides, in an easy way, optimal scaling results for many other proposals. These include implicit proposals that need the solution of systems of nonlinear equations and proposals based on a variety of integrators taken from the Hamiltonian literature. Our approach allows  to easily construct \rev{MALA-like gradient-based} proposals with scaling \(\delta_d= \ell^2/d^\mu\) with arbitrarily small \(\mu>0\) in the i.i.d.\ product scenario.  The analysis applies to \emph{products} of a given, not necessarily univariate, distribution, and also to cases where the different components are \emph{scaled differently}. In the latter case, we investigate the effect of \emph{preconditioning} the algorithms.

The initial motivation for our research came when trying to study the scaling for the following implicit counterpart of \eqref{eq:malaproposal}
\begin{equation}\label{eq:maimlaproposal}
q^\star = q - \frac{\delta}{2} \nabla V\left(\frac{1}{2}(q^\star+q)\right)+ \sqrt{\delta} p, \qquad p \sim N(0,I_m),
\end{equation}
where the Langevin equation is discretised via the \emph{implicit midpoint rule}. For the corresponding Metropolis MCMC algorithm we use the acronym MAIMLA, Metropolis Adjusted Implicit Midpoint Algorithm. \typo{When compared with MALA, MAIMLA is of interest} in some circumstances see \cite{DSZinprep}: it is geometrically ergodic in a rather general setting and, in addition, regardless of the choice of \(\delta\), the acceptance rate is 100\% for Gaussian targets (and accordingly may be expected to be very high in the neighbourhood of a mode).

The key to the derivation of our  results is the   exploitation of  {\color{black} properties of the Hamiltonian formalism and the theory of numerical integrators for Hamiltonian systems. We initially focus on proposals that can be interpreted as approximations to Hamiltonian deterministic dynamics. For integrators that are volume preserving and reversible, the  symmetries of the Hamiltonian formalism \cite{sanz2018numerical} can be used to show that the \emph{average} of the negative logarithm of the acceptance probability is equal to \emph{half of its variance}, a result that goes back to \cite{BPRSSS13}. This relationship between the mean and variance is key  in identifying the specific form of the limiting acceptance rate. Remarkably, as proved by Vogrinc and Kendall \cite{VK21} (see also \cite{VLZ23}), the same relation  between mean and variance holds for general Metropolis-Hastings algorithms (i.e.\ to algorithms whose proposals cannot be interpreted within the Hamiltonian framework).}

The paper is structured in \rev{six} sections. In Section~\ref{sec:framework} we present a general \rev{setup} to view proposals as discretisations of deterministic Hamiltonian dynamics; we then  study the energy error, which determines the acceptance probability. The main scaling result \rev{in the Hamiltonian setting} is given in Section~\ref{sec:main}. \rev{That section} also contains  the analysis of how best to tune \rev{the} parameters to maximize the progress of the Markov chain. Also studied  is the
 effect of preconditioning both the dynamics and the algorithms in situations where the target includes components whose scales are widely
 different. \rev{Section~\ref{sec:optimal_MH} leaves the Hamiltonian scenario and {\color{black} briefly}
 discusses the general Metropolis-Hastings case}, while Section~\ref{sec:proof} contains the more technical parts of the proofs of the results. \rev{The final section} provides an outlook on how to use discretisations available in the Hamiltonian integration literature to obtain samplers with arbitrarily favourable scalings. A simple numerical illustration is also provided.
\section{Preliminaries}
\label{sec:framework}
\subsection{MCMC algorithms}
We introduce an auxiliary variable \(p\) (the momentum) and the extended target in \((q,p)\) space
\begin{equation}\label{eq:extendedtarget}
\Pi(q,p) \propto \exp(-H(q,p)),\qquad H(q,p) = \frac{1}{2}\| p\|^2+V(q),\qquad (q,p)\in\bbR^m\times\bbR^m.
\end{equation} (\(\|\cdot\|\) always denotes the standard Euclidean norm.)
The marginal on \(q\) is the target \eqref{eq:target} and \(p\sim N(0,I_m)\).

The Hamiltonian function \(H\) (energy) gives rise to the Hamiltonian differential equations
\begin{equation}\label{eq:hamiltonianequations}
\frac{d}{dt} q = p,\qquad\qquad \frac{d}{dt}p = -\nabla V(q).
\end{equation}
For each \(h>0\), the flow (solution map) \((q(0),p(0))\mapsto (q(h),p(h))\) of this system preserves the extended target \eqref{eq:extendedtarget}.

We consider MCMC algorithms with proposals generated with the help of one time step of a numerical integrator for \eqref{eq:hamiltonianequations}. The integrator is described by a transformation: \(\typo{\psi_h(q,p)}:\bbR^m\times\bbR^m \rightarrow \bbR^m\times\bbR^m\), parameterized by a steplength \(h>0\), such that for any solution of \eqref{eq:hamiltonianequations}, the value
\((q(h),p(h))\in\bbR^m\times\bbR^m\) \typo{may be approximated by}
\((q^\star,p^\star) = \typo{\psi_h(q(0),p(0))}\) in the sense that
\begin{equation}\label{eq:consistency}
   (q^\star,p^\star) -(q(h),p(h)) = \typo{o(h^\eta)}, \qquad h \rightarrow 0,
\end{equation}
for some \(\eta\geq 0\). Of course, the transformation \(\psi_h\) changes when the potential \(V\) is changed, but the dependence \typo{of \(\psi_h\)} on \(V\) is not incorporated to this notation; we will take this point up below.

We are interested in integrators with the following properties, that are discrete counterparts of properties of the Hamiltonian dynamics:

{\em Preservation of volume:}
\begin{equation*}
\forall (q,p)\in\bbR^m\times\bbR^m\qquad |{\rm det}(\psi_h^\prime(q,p))| = 1,
\end{equation*}

where \(\psi_h^\prime(q,p)\) is the Jacobian matrix of \(\psi_h(q,p)\) with respect to \((q,p)\).

{\em Reversibility:}
\begin{equation*}
\forall (q,p)\in\bbR^m\times\bbR^m \qquad \psi_h(q,p)= (q^\star,p^\star)   \Leftrightarrow  \psi_h(q^\star,-p^\star) = \typo{(q,-p)} .
\end{equation*}

Each transformation \(\psi_h\) gives rise to a \rev{Metropolis--Hastings} MCMC algorithm as follows. If \(q\) is the current state of \rev{the} Markov chain, draw \(p\sim N(0,I_m)\) \typo{(independent} from \(q\) and from previous history), compute \((q^\star,p^\star) =  \psi_h( q, p)\),  and take \(q^\star\) as the proposal.  If the integrator is volume preserving and reversible and the acceptance probability for the accept/reject mechanism is taken as
\begin{equation}\label{eq:acceptanceprobability}
a(q,p) = \min(1,\exp(-\Delta(q,p,h))), \qquad \Delta(q,p,h) = H(\textcolor{blue}{q^\star},p^\star)-H(q,p),
\end{equation}
then this procedure defines a Markov chain in \(\bbR^m\)  reversible with respect to the target \eqref{eq:target}.

Some examples:

\begin{itemize}

\item {\em Random Walk Metropolis (RWM):} The transformation \((q^\star,p^\star) = \psi_h(q,p)\) reads
\begin{equation*}
q^\star = q+hp,\qquad p^\star = p.
\end{equation*}
This is volume preserving and reversible. It satisfies \eqref{eq:consistency} \typo{with} \(\eta=0\), so that from a numerical analysis point of view the discretisation is not consistent with \eqref{eq:hamiltonianequations} (which is not surprising as it does not use the potential \(V\)). This lack of consistency poses no problem for the analysis below.

\item {\em MALA:} The transformation \((q^\star,p^\star) = \typo{\psi_h(q,p)}\) is given by the \typo{St\"{o}rmer--Verlet} integrator:
\begin{equation*}
p_{1/2} = p -\frac{h}{2}\nabla V(q), \qquad q^\star = q+hp_{1/2},\qquad p^\star = p_{1/2} -\frac{h}{2}\nabla V(q^\star).
\end{equation*}
It is trivial to show that \(\psi_h\) \typo{is volume preserving} and reversible; for sufficiently smooth potentials, property \eqref{eq:consistency} holds with \typo{\(\eta = 2\)} (one says that \typo{St\"{o}rmer--Verlet} is a consistent second order integrator).

After eliminating \(p_{1/2}\) and setting
\begin{equation}\label{eq:hdelta}
\delta = h^2
\end{equation}
the formula to compute the proposal may be written in the standard form \eqref{eq:malaproposal}.
 Note that \(\delta\) is the variance of the random term \(\sqrt{\delta} p\) in the definition of \rev{\(q^\star\)}.

\item{ \em MAIMLA:} \typo{Now}, \(\psi_h\) is defined by the implicit midpoint discretisation:
\begin{equation*}
q^\star -q = \frac{h}{2} (p^\star+p), \qquad p^\star-p = -h \nabla V\Big(\frac{1}{2}(q^\star+q)\Big).
\end{equation*}
Finding \(q^\star\) requires the solution of a system of \(m\) nonlinear equations. We assume that, for each \((q,p)\) the system has a unique solution for sufficiently small \(h\).
Again, it is trivial to check that the transformation is volume preserving and reversible and \eqref{eq:consistency} holds with \(\eta=2\) (second order).
\end{itemize}

Many other integrators are possible, as discussed in Section~\ref{sec:outlook}.
\subsection{Hamiltonian equations: properties of the energy error}
\label{sec:energyerror}


For solutions of \eqref{eq:hamiltonianequations}, \(H(q(t),p(t))\) does not vary with \(t\) (conservation of energy). Therefore, the value \(H(q,p)\) in \eqref{eq:acceptanceprobability} coincides with the value \(H(q(h),p(h))\), where \((q(t),p(t))\) denotes the  solution with initial value \((q(0),p(0))=(q,p)\) and thus \(\Delta(q,p,h)\) is the \emph{energy error} \(H(q^\star,p^\star)-H(q(h),p(h))\) resulting from approximating
the true value \((q(h),p(h))\) by means of the numerical solution \((q^\star,p^\star) =\psi_h(q,p)\).

We will
consider the following (mild) conditions on \(\Delta\).

{\em Condition 1.} There exist real functions \(\alpha(q,p)\), \(\rho(q,p,h)\) and a constant \(\nu>0\) such that
\begin{equation*}
\Delta(q,p,h) = h^{\nu}\alpha(q,p)+h^\nu\rho(q,p,h),\qquad \lim_{h\rightarrow 0}\rho(q,p,h) = 0.
\end{equation*}
A Taylor expansion of \(\Delta\) as a function of \(h\) shows that this condition will typically hold \typo{with} \(\nu = \eta+1\) (see \eqref{eq:consistency}) provided that \(V\) is \typo{sufficiently smooth}. To be precise, for the examples considered above one may easily prove:
\begin{itemize}
\item {\em RWM:} For continuously differentiable \(V\), \(\nu = 1\) and (superscripts denote Cartesian components of the vectors \(q\) and \(p\))
\begin{equation}\label{eq:alpharwm}
\alpha(q,p) = \sum_{i=1}^m \frac{\partial V}{\partial q^i}p^i.
\end{equation}

\item {\em MALA:} This has \(\nu=3\) and
\begin{equation}\label{eq:alphamala}
\alpha(q,p) = \frac{1}{4} \sum_{i,j=1}^m \frac{\partial^2 V}{\partial q^i \partial q^j}\frac{\partial V}{\partial q^i} p^j
-
\frac{1}{12} \sum_{i,j,k=1}^m \frac{\partial^3 V}{\partial q^i \partial q^j\partial q^k} p^ip^jp^k,
\end{equation}
provided that \(V\) is three times continuously differentiable.

\item {\em MAIMLA:} For \(V\) three times continuously differentiable,  \(\nu = 3\) and
\begin{equation}\label{eq:alphamaimla}
\alpha(q,p) =
\frac{1}{24}\sum_{i,j,k=1}^m \frac{\partial^3 V}{\partial q^i \partial q^j\partial q^k} p^ip^jp^k.
\end{equation}
\end{itemize}

The three integrators we have just described have \emph{odd} values of \(\nu\). This happens for all reversible algorithms \cite[Section 3.6.2]{sanz2018numerical}. Algorithms with \(\nu>3\) are discussed in Section~\ref{sec:outlook}.

The summations that feature in the formulas \eqref{eq:alpharwm}--\eqref{eq:alphamaimla} involving derivatives of the potential and components of the momentum are studied in the theory of symplectic integration, where they are called \emph{elementary Hamiltonians} \cite[Section~11.5.3]{sanz2018numerical}. Their structure may be easily described by using suitable graphs.

\rev{While Condition 1 describes the behaviour of \(\Delta\) as a function of \(h\), the following condition controls the variation of \(\Delta\) as a function of \((q,p)\).}

{\em Condition 2.} There exists a real function \(D(q,p)\) such that
\begin{equation*}
\rev{\sup_{0<h\leq 1} \frac{|\Delta(q,p,h)|^3}{h^{3\nu}} \leq D(q,p)}
\end{equation*}
and \(D\) is integrable with respect to the probability density \eqref{eq:extendedtarget}.

For typical integrators, this condition will hold in particular whenever \(\exp(-V(q))\) decays exponentially as \(\|q\|\rightarrow \infty\) and the derivatives of \(V\) grow at most polynomially. For RWM, MALA and MAIMLA, Taylor expansions in powers of \(h\)  with integral remainder show that it is sufficient that  \(V\) has polynomially growing continuous derivatives of order less than or equal to \(2, 4, 4\) respectively.

The properties of reversibility and volume preservation impose remarkable symmetry on the geometric behaviour of \(\psi_h\)
 (see \cite[Figure~6.1]{BRSS18}). That symmetry has important implications for
  the mean, second moment and variance of \(\Delta(q,p,h)\)
\begin{equation}\label{eq:mean}
\mu(h) = \bbE(\Delta(q,p,h)),\qquad
s(h) = \bbE(\Delta(q,p,h)^2),\qquad
\sigma^2(h)  =  s(h)-\mu(h)^2.
\end{equation}
 (Expectations \(\bbE\) are taken with respect to the extended target \eqref{eq:extendedtarget}. {\color{black} They are finite if Condition 2 holds.)}
 {\rev We have the following key result, which is essentially \cite[Theorem 7.1]{BRSS18}, a generalization of \cite[Proposition 3.4]{BPRSSS13}. These references operate with a  version of Condition 2 slightly weaker than that just considered. Here we have strengthened this hypothesis in order to avoid the somewhat obscure use of the dominated convergence theorem made in the proofs provided in those references.}

 \begin{theorem}\label{th:acta}Assume that the integrator is reversible, volume preserving and that Conditions 1 and 2 above hold. Then
 \begin{equation*}
 \lim_{h\rightarrow 0}\frac{\mu(h)}{h^{2\nu}}=\frac{\Sigma}{2},\qquad\qquad
 \lim_{h\rightarrow 0}\frac{\sigma^2(h)}{h^{2\nu}}=\Sigma
 \end{equation*}
 with
 \begin{equation*}
 \Sigma = \bbE(\alpha(q,p)^2).
 \end{equation*}
 \end{theorem}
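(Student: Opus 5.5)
The key identity is $\bbE\big(1-e^{-\Delta(q,p,h)}\big)=0$. It follows from volume preservation and reversibility together with the invariance of $H$ under $p\mapsto -p$.

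To prove it, let $S(q,p)=(q,-p)$. Reversibility says $S\circ\psi_h\circ S\circ\psi_h=\mathrm{id}$, so $\Phi=S\circ\psi_h$ is an involution. It preserves volume, and $H\circ S=H$. Changing variables $(q',p')=\Phi(q,p)$ gives
\begin{equation*}
\int e^{-H(\psi_h(q,p))}\,dq\,dp=\int e^{-H(\Phi(q,p))}\,dq\,dp=\int e^{-H(q',p')}\,dq'\,dp'.
\end{equation*}
Hence $\bbE\big(e^{-\Delta}\big)=1$ with respect to \eqref{eq:extendedtarget}. Note this is an identity for each fixed $h$ and needs no integrability beyond that of $e^{-H}$.

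Next I expand the exponential with an explicit remainder. Write $e^{-x}=1-x+\tfrac{x^2}{2}-R(x)$ with $0\le R(x)\le |x|^3/6$ for $x\ge 0$. For $x<0$ this bound fails, so instead use $1-x+x^2/2-e^{-x}\le 0$ and handle signs via $\min$. A cleaner route is to apply the identity to the symmetric combination: by the involution, the map $\Phi$ sends $\Delta$ to $-\Delta$ composed with $\Phi$, with density weight $e^{-\Delta}$. So for any $f$, $\bbE f(\Delta)=\bbE\big(f(-\Delta)e^{-\Delta}\big)$.

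Taking $f(x)=x$ gives
\begin{equation*}
\mu(h)=-\bbE\big(\Delta e^{-\Delta}\big)=\tfrac12\,\bbE\big(\Delta(1-e^{-\Delta})\big).
\end{equation*}
Now $x(1-e^{-x})=x^2-x\,g(x)$, where $g(x)=e^{-x}-1+x$. The function $x(1-e^{-x})$ is nonnegative, but it grows exponentially for $x<0$, so domination is problematic. To control that, I use instead $f(x)=x\wedge$-type truncations. The simplest is $\tfrac12\bbE\big(\Delta(1-e^{-\Delta})\big)=\tfrac12\bbE\big(\Delta\tanh(\Delta/2)\big)\cdot$ (symmetrised). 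Precisely, averaging $\bbE f(\Delta)$ and $\bbE f(-\Delta)e^{-\Delta}$ for $f(x)=x$ yields
\begin{equation*}
\mu(h)=\bbE\big(\Delta\tanh(\Delta/2)\big),
\end{equation*}
by writing $\bbE\,\phi(\Delta)(1+e^{-\Delta})$ in both ways. Since $|x\tanh(x/2)-x^2/2|\le C|x|^3$ for all $x$, this is bounded and cubic.

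Therefore $|\mu(h)-\tfrac12 s(h)|\le C\,\bbE|\Delta|^3\le C h^{3\nu}\bbE D$ by Condition 2, so $\mu(h)=\tfrac12 s(h)+O(h^{3\nu})$.

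For $s(h)$, Condition 1 gives $\Delta^2/h^{2\nu}=(\alpha+\rho)^2\to\alpha^2$ pointwise. Condition 2 gives $\Delta^2/h^{2\nu}\le D^{2/3}\le 1+D$, which is integrable, so dominated convergence yields $s(h)/h^{2\nu}\to\Sigma$. Finiteness of $\Sigma$ follows from $|\alpha|^3\le D$ in the limit. Then $\mu(h)/h^{2\nu}\to\Sigma/2$. Finally $\mu(h)^2/h^{2\nu}=O(h^{2\nu})\to 0$, so $\sigma^2(h)/h^{2\nu}\to\Sigma$.

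The main obstacle is the symmetrisation step. One has to justify $\bbE f(\Delta)=\bbE\big(f(-\Delta)e^{-\Delta}\big)$ for functions where both sides are finite, and to choose an $f$ making the cubic remainder globally bounded without exponential growth on either tail. The $\tanh$ form is what I expect to resolve this. I would verify it by applying the change of variables to the integrable function $\Delta\,e^{-H}/(1+e^{-\Delta})$, which is dominated by $|\Delta|e^{-H}$ and hence integrable under Condition 2.
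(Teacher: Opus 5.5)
Your proof is correct. It shares the paper's skeleton: the volume-preserving involution $S\circ\psi_h$ gives $\bbE f(\Delta)=\bbE\bigl(f(-\Delta)e^{-\Delta}\bigr)$, dominated convergence with dominating function $D^{2/3}$ gives $s(h)/h^{2\nu}\to\Sigma$, and $\mu^2/h^{2\nu}\to0$. The difference is in the key estimate.

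The paper stays with $\mu(h)=\tfrac12\bbE\bigl(\Delta(1-e^{-\Delta})\bigr)$ and uses the pointwise bound $|u|\,|e^{u}-1-u|\le\tfrac12|u|^3(e^{u}+1)$. It then applies the symmetry a second time, with the even function $|x|^3$, to replace $\bbE\bigl(|\Delta|^3e^{-\Delta}\bigr)$ by $\bbE|\Delta|^3$.

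You instead symmetrise once with a logistic weight and obtain the exact representation $\mu(h)=\bbE\bigl(\Delta\tanh(\Delta/2)\bigr)$. Its integrand is globally within $C|x|^3$ of $x^2/2$, so the exponential tail never appears and one elementary inequality finishes the estimate.

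What each buys:
\begin{itemize}
\item The paper's form shows the mechanism directly ($1-e^{-\Delta}\approx\Delta$). It also matches line by line the Metropolis--Hastings proof of Theorem~\ref{thm:mean_variance_relation}.
\item Yours is slightly more economical. It would also carry over to that setting, since it uses only the symmetry identity.
\end{itemize}

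Your write-up of the $\tanh$ step is garbled and should be replaced. The displayed line equating $\tfrac12\bbE\bigl(\Delta(1-e^{-\Delta})\bigr)$ with $\tfrac12\bbE\bigl(\Delta\tanh(\Delta/2)\bigr)$ ``(symmetrised)'' is off by a factor of $2$. Averaging the identity with $f(x)=x$ alone just returns the paper's formula. The $\tanh$ form actually comes from the weight you mention at the end. Take $\phi(x)=x/(1+e^{-x})$; then
\[
\mu=\bbE\bigl(\phi(\Delta)(1+e^{-\Delta})\bigr)=\bbE\,\phi(\Delta)+\bbE\,\phi(-\Delta)=\bbE\bigl(\Delta\tanh(\Delta/2)\bigr).
\]
Every term is dominated by $|\Delta|$, which is integrable by Condition 2. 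Equivalently, $f(x)=x/(1+e^{x})$ satisfies $f(-x)e^{-x}=-f(x)$, so $\bbE\bigl(\Delta/(1+e^{\Delta})\bigr)=0$, and $x\tanh(x/2)=x-2x/(1+e^{x})$. State it this way and drop the abandoned remainder-$R(x)$ detour.
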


{\color{black}The proof is postponed to Section~\ref{sec:proof}. There are however two very important points to be noticed at this point:}
\begin{itemize}
 \item While, according to Condition 1, \(\Delta(q,p,h)\) is \typo{of size} \(O(h^\nu)\) at each fixed \((q,p)\), its average is
 much smaller: \(O(h^{2\nu})\). Due to reversibility, if \((q^\star,p^\star) = \psi_h(q,p)\), then (see \cite[Figure~6.1]{BRSS18})
 \begin{equation}\label{eq:opposite}
 \Delta(q,p) = -\Delta(q^\star,-p^\star),
 \end{equation}
 a relation which leads to much cancellation in the computation of \(\bbE(\Delta)\).  As we shall see later this relation corresponds to the symmetry inherent in the Metropolis-Hastings formula or, ultimately, in the definition of detailed balance.
 \item In addition, for \(h\) small, the \emph{mean of \(\Delta\) is approximately half its variance}. This relation between mean and variance, valid for all volume preserving, reversible integrators, will explain the common expression for the limit acceptance probability they share.
 \end{itemize}

 \subsection{Scaling the potential}\label{sec:scaling}

We now investigate the effect of scaling the potential. Let \(\lambda>0\) be a scaling factor, and consider the scaled potential \(V^\lambda(q) = V(q/\lambda)\), where \(V\) is as above. We shall associate the superindex \(\lambda\) with objects pertaining to the scaled potential. The scaled probability density is \(\pi^\lambda(q) \propto \exp( -V^\lambda(q))\), which leads to the scaled Hamiltonian \(H^\lambda(q,p)= (1/2)\|p\|^2+V(q/\lambda)\) with Hamiltonian equations
\begin{equation*}
\frac{d}{dt} q = p,\qquad\qquad \frac{d}{dt}p = -\frac{1}{\lambda}\nabla V(q/\lambda).
\end{equation*}

After the change of variables
\begin{equation*}
\bar{q} = q/\lambda,\qquad \bar{p}= p,\qquad \bar{t} = t/\lambda
\end{equation*}
the scaled equations of motion become
\begin{equation*}
\frac{d}{d\bar{t}} \bar{q} = \bar{p},\qquad\qquad \frac{d}{d\bar{t}}\bar{p} = -\nabla V(\bar{q}),
\end{equation*}
a system that coincides with the original, unscaled \eqref{eq:hamiltonianequations}. Thus, for the differential equations, the effect of scaling the potential may be reproduced by using the unscaled dynamics as applied with a scaled time and scaled dependent variables. We now demand that the same happens at the level of the integrator, i.e.\
if \(\psi_h(q,p)\) is the transformation associated with the original potential \(V\) and \(\psi_h^\lambda(q,p)\) the transformation when the integrator is used for the rescaled differential equations:
\begin{equation}\label{eq:equivariance}
(q^\star,p^\star) = \psi_h^\lambda(q,p)\Leftrightarrow (q^\star/\lambda,p^\star) = \psi_{h/\lambda}(q/\lambda,p).
\end{equation}
We will say that an integrator is \emph{equivariant} if the two following processes lead to the same result: (i) rescaling the variables in the differential system and then applying the integrator, (ii) applying the integrator to the original differential equations and then rescaling variables in the discrete equations. The notion of equivariance is akin to the notion of dimensional correctness in physics. All integrators of any interest are equivariant; the performance of a nonequivariant integrator would depend on the particular units used to measure the quantities involved.

For an equivariant integrator the energy error \(\Delta^\lambda(q,p,h)\) for the scaled potential  satisfies
\begin{equation*}
\Delta^\lambda(q,p,h)= \Delta(q/\lambda,p,h/\lambda),
\end{equation*}
where \(\Delta\) is the energy error for the original potential. Furthermore we have the following elementary result:
\begin{lemma}\label{lem:changein expectation} Let \(F(q,p)\) be a function defined in \(\bbR^m\times\bbR^m\) and \(\lambda>0\). The expectation of \(F(q/\lambda,p)\) when
\((q,p)\) is distributed according to the density \(\propto \exp(-H^\lambda(q,p))\) coincides with the expectation of \(F(q,p)\) when
\((q,p)\) has density \(\propto \exp(-H(q,p))\).
\end{lemma}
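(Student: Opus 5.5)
The plan is a direct change of variables $\bar q = q/\lambda$ in the integral defining the expectation; the only point needing care is the normalising constant. I will use the fact that $H^\lambda(q,p) = \frac12\|p\|^2 + V(q/\lambda) = H(q/\lambda,p)$.

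\textbf{Step 1: write the expectation explicitly.} The expectation in question is
\[
\frac{\int_{\bbR^m\times\bbR^m} F(q/\lambda,p)\,\exp(-H(q/\lambda,p))\,dq\,dp}{\int_{\bbR^m\times\bbR^m} \exp(-H(q/\lambda,p))\,dq\,dp}.
\]

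\textbf{Step 2: change variables.} In both integrals substitute $q=\lambda\bar q$, keeping $p$ fixed. This is a linear bijection with Jacobian $dq = \lambda^m\, d\bar q$. The factor $\lambda^m$ then appears in the numerator and in the denominator and cancels. What remains is
\[
\frac{\int F(\bar q,p)\exp(-H(\bar q,p))\,d\bar q\,dp}{\int \exp(-H(\bar q,p))\,d\bar q\,dp},
\]
which is exactly the expectation of $F(q,p)$ under the density $\propto\exp(-H(q,p))$.

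\textbf{Step 3: integrability.} Since the map is a bijection, the change of variables formula applies equally to $|F|$. So $F(q/\lambda,p)$ is integrable under the scaled density if and only if $F$ is integrable under the unscaled one, and the equality holds whenever either side is defined.

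The argument is routine. The one point requiring care is that the unnormalised densities carry the Jacobian factor $\lambda^m$, and it disappears only through the ratio with the normalising constant. An equivalent probabilistic phrasing is that if $(q,p)$ has density $\propto \exp(-H^\lambda)$, then $(q/\lambda,p)$ has density $\propto\exp(-H)$.
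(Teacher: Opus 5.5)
Your proof is correct and is essentially the paper's argument: write the expectation as a ratio of integrals and substitute $q=\lambda\bar q$ in both numerator and denominator, so that the Jacobian factor $\lambda^m$ cancels. Your remark on integrability, which applies the same bijection to $|F|$, is a small addition that the paper leaves implicit.
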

\begin{proof}The first expectation is:
\begin{equation*}
\frac{\int_{\bbR^m\times \bbR^m}F(q/\lambda,p)\:\exp(-((1/2)\|p\|^2+V^\lambda(q)))\,dqdp}
{\int_{\bbR^m\times \bbR^m}\exp(-((1/2)\|p\|^2+V^\lambda(q)))\,dqdp}.
\end{equation*}
After taking \(q/\lambda\) as a new variable in both integrals, the quotient is seen to coincide with the second expectation.
\end{proof}

In addition,  with the notation
\begin{equation}\label{eq:momentsoldandnew}
\mu^\lambda(h) = \bbE^\lambda(\Delta^\lambda(q,p,h)),\qquad
s^\lambda(h) = \bbE^\lambda(\Delta^\lambda(q,p,h)^2),\qquad
\sigma^{2,\lambda}(h)  =  s^\lambda(h)-\mu^\lambda(h)^2,
\end{equation}
where \(\bbE^\lambda\) refers to expectations with respect to the density \(\Pi^\lambda(q,p)\propto\exp(-H^\lambda(q,p))\). Equation  \eqref{eq:equivariance} and Lemma~\ref{lem:changein expectation} show that, for an equivariant integrator,
\begin{equation}\label{eq:scaledmoments}
\mu^\lambda(h) = \mu(h/\lambda),\qquad s^\lambda(h)=s(h/\lambda), \qquad \sigma^{2,\lambda}(h) = \sigma^2(h/\lambda),
\end{equation}
where the quantities in the right hand-sides refer to the unscaled potential  and were defined in \eqref{eq:mean}.

\section{Optimal scaling using the Hamiltonian framework}\label{sec:main}
For \(d\geq 1\), we set \(N=dm\) and write vectors in \(\bbR^N\) as \(Q=(q_1,\dots,q_d)\) with \(q_i\in\bbR^m\).
We  fix the potential \(V\) in \(\bbR^m\) as in \eqref{eq:target}  and consider the target probability density in \(\bbR^N\) defined by
\begin{equation}\label{eq:product target}
\pi^\lambda(Q)\propto \prod_{i=1}^d \exp(-V(q_i/\lambda_i)),
\end{equation}
where \(\lambda_i>0\) are scaling constants. After defining the potential
\begin{equation*}
{\cal V}(Q) = {\cal V}(q_1,\dots,q_d)  = \sum_{i=1}^d V(q_i/\lambda_i),
\end{equation*}
the target is of the form \(\pi^\lambda(Q)\propto \exp(-{\cal V}(Q))\) and the algorithms in Section~\ref{sec:framework} are applicable. The Hamiltonian
that features in the extended target \(\propto \exp(-{\cal H}(Q,P))\) is
\begin{equation*}
{\cal H}(Q,P) = {\cal H}(q_1,\dots,q_d,p_1\dots,p_d) = \frac{1}{2} \sum_{i=1}^d  \|p_i\|^2+\sum_{i=1}^d V(q_i/\lambda_i),
\end{equation*}
and the Hamiltonian equations  in \(\bbR^N\times\bbR^N\) are  obtained by simply juxtaposing the \(d\) Hamiltonian systems  in \(\bbR^m\times\bbR^m\) corresponding to the different
\((q_i,p_i)\). There is no coupling between the  differential equations for the different components and the same is true for the numerical integrator.
However in the accept/reject mechanism the components come together because the acceptance probability is \(\min(1,\exp(-\Delta_d))\) with
(stars denote proposals)
\begin{equation*}
\Delta_d (Q,P,h)  = {\cal H}(q_1^\star,\dots,q_d^\star,p_1^\star,\dots,p_d^\star) - {\cal H}(q_1,\dots,q_d,p_1\dots,p_d).
\end{equation*}
With the notation as in the previous section,
\begin{equation*}
\Delta_d (Q,P,h) =  \sum_{i=1}^d \Delta^{\lambda_i}(q_i,p_i,h).
\end{equation*}

\subsection{A general scaling result}
We are interested in finding conditions that ensure that  \(\Delta_d\) has a distributional limit as \(d\rightarrow \infty\).  It is reasonable to assume that the time-step \(h\) in the integration has to shrink as \(d\) increases, so that the larger number of terms being summed in the preceding display is compensated by a decrease in the size of the individual terms. We denote by \(h_d\) the integration time-step to be used when there are \(d\) components.

Here is the main result of this paper:

\begin{theorem}\label{th:main}In the setup just described, assume that the integrator being used is equivariant, volume preserving and reversible and that, for the potential \(V\) in \(\bbR^m\), Conditions 1 and 2 hold. Assume that
  the scaling constants \(\lambda_i\), i = 1, 2,\dots, satisfy
\begin{equation}\label{eq:gamma}
\lim_{d\rightarrow \infty} d^{-\gamma} \sum_{i=1}^d \frac{1}{\lambda_i^{2\nu}} = K <\infty
\end{equation}
for some constants \(\gamma>0\), \(K>0\) (\(\nu\) is as in Condition 1) and, in addition,
\begin{equation}\label{eq:conditionmax}
 \lim_{d\rightarrow \infty} d^{-\gamma} \max_{i =1,\dots,d} \frac{1}{\lambda_i^{2\nu}} = 0.
\end{equation}
Choose \(h_d = \ell/d^{\gamma/(2\nu)}\), for some constant \(\ell>0\).
Then, for the density \(\propto \exp(-{\cal H})\)  in \(\bbR^N\times\bbR^N\), the expectation \(\bbE(a_d)\) of the acceptance probability \[a_d(Q,P,h)= \min(1,\exp(-\Delta_d(Q,P,h_ d)))\] converges as \(d\rightarrow \infty\) to the limit
\begin{equation}\label{eq:main}
A = 2 \Phi\left(-\frac{\ell^{\nu}\sqrt{K\Sigma}}{2}\right),
\end{equation}
where \(\Phi\) denotes the cumulative distribution function of the standard normal distribution \(N(0,1)\) and \(\Sigma\) is defined in Theorem~\ref{th:acta}.
\end{theorem}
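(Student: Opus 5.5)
The plan is to show that \(\Delta_d(Q,P,h_d)=\sum_{i=1}^d \Delta^{\lambda_i}(q_i,p_i,h_d)\) converges in distribution to a Gaussian \(N(K\ell^{2\nu}\Sigma/2,\,K\ell^{2\nu}\Sigma)\). Once this is known, the map \(x\mapsto\min(1,e^{-x})\) is bounded and continuous, so \(\bbE(a_d)\to\bbE\min(1,e^{-Z})\) with \(Z\sim N(s^2/2,s^2)\) and \(s^2=K\ell^{2\nu}\Sigma\). A standard Gaussian computation then gives \(\bbE\min(1,e^{-Z})=\Phi(-s/2)+e^{-s^2/2+s^2/2}\Phi(-s/2)=2\Phi(-s/2)\), which is \eqref{eq:main}. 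The mean-equals-half-variance relation of Theorem~\ref{th:acta} is precisely what makes these two terms equal.

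Under the extended target the pairs \((q_i,p_i)\) are independent, with \((q_i,p_i)\) distributed as \(\propto\exp(-H^{\lambda_i})\). The summands therefore form a triangular array of independent random variables. By equivariance and \eqref{eq:scaledmoments}, the \(i\)-th summand has mean \(\mu(h_d/\lambda_i)\) and variance \(\sigma^2(h_d/\lambda_i)\). Set \(\epsilon_{i,d}=h_d/\lambda_i\). Condition \eqref{eq:conditionmax} together with \(h_d^{2\nu}=\ell^{2\nu}d^{-\gamma}\) gives \(\max_i \epsilon_{i,d}^{2\nu}\to0\), so all \(\epsilon_{i,d}\to0\) uniformly in \(i\). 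Theorem~\ref{th:acta} gives \(\mu(\epsilon)=\epsilon^{2\nu}(\Sigma/2+o(1))\) and \(\sigma^2(\epsilon)=\epsilon^{2\nu}(\Sigma+o(1))\) as \(\epsilon\to0\). Uniform smallness lets the \(o(1)\) terms be taken uniformly over \(i\). Summing and using \eqref{eq:gamma},
\[
\sum_i \mu(\epsilon_{i,d}) = \Big(\tfrac{\Sigma}{2}+o(1)\Big)\ell^{2\nu} d^{-\gamma}\sum_i\lambda_i^{-2\nu}\to \tfrac{K\ell^{2\nu}\Sigma}{2},
\]
and in the same way the total variance tends to \(K\ell^{2\nu}\Sigma\).

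For the central limit step I would verify Lyapunov's condition with third absolute moments. Lemma~\ref{lem:changein expectation} and Condition 2 give, for \(\epsilon_{i,d}\le1\) (true for large \(d\)),
\[
\bbE|\Delta^{\lambda_i}|^3=\bbE|\Delta(q,p,\epsilon_{i,d})|^3\le \epsilon_{i,d}^{3\nu}\bbE D .
\]
Hence
\[
\sum_i \bbE|\Delta^{\lambda_i}|^3\le \bbE D\,\max_i\epsilon_{i,d}^{\nu}\sum_i\epsilon_{i,d}^{2\nu}\to0,
\]
because the sum converges to \(K\ell^{2\nu}\) and the max tends to zero. The centred third moments are controlled by the uncentred ones up to a constant, since the means are \(O(\epsilon^{2\nu})\), which is even smaller. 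The total variance has a positive limit when \(\Sigma>0\), so Lyapunov's CLT applies to the centred sum. Adding back the converging mean gives the Gaussian limit. If \(\Sigma=0\), Chebyshev shows \(\Delta_d\to0\) in probability and \(A=1=2\Phi(0)\), which is consistent.

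The main obstacle is the uniformity in \(i\) when applying Theorem~\ref{th:acta} to \(\epsilon_{i,d}\). I expect it to be handled cleanly by writing \(\mu(\epsilon)=\epsilon^{2\nu}(\Sigma/2+r(\epsilon))\) with \(r(\epsilon)\to0\), and bounding \(|\sum_i \epsilon_{i,d}^{2\nu}r(\epsilon_{i,d})|\le \sup_{\epsilon\le\max_i\epsilon_{i,d}}|r(\epsilon)|\sum_i\epsilon_{i,d}^{2\nu}\). The same bound handles the variance. This is exactly where hypothesis \eqref{eq:conditionmax} is used.
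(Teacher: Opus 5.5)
Your proof is correct. Its overall structure matches the paper's: an independent triangular array, uniform smallness of \(h_d/\lambda_i\) obtained from \eqref{eq:conditionmax}, Theorem~\ref{th:acta} applied uniformly in \(i\) to get the limiting mean \(K\ell^{2\nu}\Sigma/2\) and variance \(K\ell^{2\nu}\Sigma\), and then the bounded continuous map \(x\mapsto\min(1,e^{-x})\) together with the Gaussian computation.

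The one genuine difference is how you justify the central limit step. The paper verifies Lindeberg's condition. It moves each term to the unscaled target with the change-of-variables lemma, normalizes by \(\sigma(h_d/\lambda_i)\), and applies dominated convergence. Condition~1 supplies the pointwise limit \(\alpha/\sqrt{\Sigma}\), and \(D^{2/3}\) from Condition~2 serves as the dominating function. You verify Lyapunov's condition with third moments instead. You read Condition~2 directly as \(\bbE|\Delta(q,p,\epsilon)|^3\le\epsilon^{3\nu}\bbE D\) and use \(\sum_i\epsilon_{i,d}^{3\nu}\le\max_i\epsilon_{i,d}^{\nu}\sum_i\epsilon_{i,d}^{2\nu}\to0\).

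Your route has several advantages:
\begin{itemize}
\item It is shorter.
\item Condition~1 enters only through Theorem~\ref{th:acta}.
\item It shows exactly where \eqref{eq:conditionmax} is used, namely to make \(\max_i\epsilon_{i,d}^{\nu}\) vanish.
\item It avoids the somewhat informal passage in the paper where the threshold \(\epsilon\,{\rm Var}_d^{1/2}\) depends on \(d\) while the limit is taken in \(t\to0\).
\item Dividing by \(\sigma(t)\) in the paper's argument tacitly requires \(\Sigma>0\); your separate Chebyshev treatment of \(\Sigma=0\) covers the case the paper leaves implicit.
\end{itemize}

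What the Lindeberg/dominated-convergence route buys in return is a weaker requirement for the CLT step taken alone: integrability of \(\sup_{0<h\le1}\Delta^2/h^{2\nu}\) rather than of the cube. Since the paper's proof of Theorem~\ref{th:acta} already uses the cubic bound, your choice costs nothing here.
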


A simple outline of the proof of this result will be given presently; the missing technical details are provided in Section~\ref{sec:proof} below. Since the integrator is assumed to be equivariant, according to \eqref{eq:equivariance},
\begin{equation}\label{eq:Deltad}
\Delta_d = \sum_{i=1}^d \Delta(q_i/\lambda_i,p_i,h/\lambda_i).
\end{equation}
We will use a Central Limit Theorem to identify the distributional limit of \(\Delta_d\) as \(d\rightarrow \infty\). We consider the triangular array of random variables whose \(d\)-row, \(d=1,2,\dots\), is
\begin{equation}\label{eq:row}
\Delta(q_1/\lambda_1,p_1,h_d/\lambda_1), \Delta(q_2/\lambda_2,p_2,h_d/\lambda_2),\dots, \Delta(q_d/\lambda_d,p_d,h_d/\lambda_d).
\end{equation}
Each \(q_i\) is distributed as \(\propto \exp(-V(q_i/\lambda_i))\) and each \(p_i\) is normal with zero mean and covariance matrix \(I_m\); all these variables are mutually independent. According to \eqref{eq:momentsoldandnew},  the variances of the variables in \eqref{eq:row} are
\begin{equation*}
\sigma^2(h_d/\lambda_1), \sigma^2(h_d/\lambda_2),\dots, \sigma^2(h_d/\lambda_d),
\end{equation*}
and by  Theorem~\ref{th:acta}, the variance of \(\Delta_d\) will  approximately be
\(
\sum_{i=1}^d {h_d^{2\nu}}/{\lambda_i^{2\nu}} \Sigma
\)
;
the corresponding expectation will be approximately  \emph{a half} of this value. Using the definition of \(h_d\):
\begin{equation*}
\sum_{i=1}^d \frac{h_d^{2\nu}}{\lambda_i^{2\nu}} \Sigma = \ell^{2\nu} \: d^{-\gamma} \sum_{i=1}^d \frac{1}{\lambda_i^{2\nu}}\Sigma,
\end{equation*}
a quantity that, according to the hypotheses on the \(\lambda_i\), tends to \(\ell^{2\nu} K \Sigma\) as \(d\rightarrow \infty\). By invoking the Central Limit Theorem for triangular arrays (\eqref{eq:conditionmax} is required to check the Lindeberg condition), we conclude that \(\Delta_d\) converges in distribution to a normal variable \(\Delta_\infty\) of mean \((1/2) \ell^{2\nu}K\Sigma\) and variance \(\ell^{2\nu}K\Sigma\). From the boundedness of \(x\mapsto \min(1,\exp(-u))\), \(\bbE(a_d)\) converges to \(\bbE(\min(1,\exp(-\Delta_\infty)))\); the last expectation may be computed analytically and has the value \eqref{eq:main}. This concludes the outline of the proof.

Here are some useful choices of scaling constants:
\begin{enumerate}
\item When \(\lambda_i=1\) for each \(i\) (the different \(q_i\) share the common density \(\propto\exp(-V)\)), \(\gamma=1\), \(K=1\) (\eqref{eq:conditionmax} trivially holds) and \(h_d = \ell/d^{1/(2\nu)}\).

\item For \(\lambda_i = 1/i^\kappa\), with \(\kappa>0\),
\begin{equation*}
\sum_{i=1}^d \frac{1}{\lambda_i^{2\nu}} = \sum_{i=1}^d i^{2\nu\kappa} \sim \frac{d^{2\nu\kappa+1}}{2\nu\kappa+1},\qquad d\rightarrow \infty,
\end{equation*}
and therefore \(\gamma =2\nu\kappa+1 \), \(K = 1/(2\nu\kappa+1)\) leading to \(h_d = \ell/d^{\kappa+1/(2\nu)}\). Again \eqref{eq:conditionmax} holds.
\end{enumerate}

Note that for choice 2 the exponent of \(d\) in the denominator of \(h_d\) exceeds the exponent \(1/(2\nu)\) of the \(\lambda_i = 1\) case by an amount \(\kappa\). In fact, for choice 2, the time step \(h_d\) has to be selected to cater for the small \(\lambda_i\) components for which \(q_i\) undergoes fast changes. As a result the components \(q_i\) with large \(\lambda_i\) are integrated with a time step that is too small relative to their own rate of change. The need to use very small values of \(h_d\) may be circumvented by using preconditioning (see Section~\ref{sec:preconditioning}). On the other hand, if \(\kappa\) increases, then \(K = 1/(2\nu\kappa+1)\) decreases and the acceptance rate in \eqref{eq:main} increases; this was of course to be expected because \(h_d = \ell/d^{\kappa+1/(2\nu)}\) is a decreasing function of \(\kappa\).

Let us finish this section by applying Theorem~\ref{th:main} to the algorithms considered in Section~\ref{sec:framework}:
\begin{itemize}
\item \emph{RWM:} This has \(\nu=1\). For the i.i.d. case with \(\lambda_i=1\), where \(\gamma=1\), \(K=1\), the scaling is
\(h_d = \ell/\sqrt{d}\) (the variance of the proposal  will then be \(\delta_d=\ell^2/d\)).
Under this scaling, the limit expected acceptance rate will be
\begin{equation*}
A =  2 \Phi\left(-\frac{\ell\sqrt{\Sigma}}{2}\right)
\end{equation*}
with (see \eqref{eq:alpharwm})
\begin{equation*}
\Sigma = \bbE\big( \left(\sum_{i=1}^m \frac{\partial V}{\partial q^i}p^i\right)^2 \big) =
\bbE\big( \sum_{i=1}^m \left(\frac{\partial V}{\partial q^i}\right)^2 \big)=\bbE\big(\|\nabla V(q)\|^2\big),
\end{equation*}
where we have taken into account that the \(p^i\) are uncorrelated and have unit variance.
In the particular case \(m=1\), the scaling \(\delta_d = \ell^2/d\) and the formula for \(A\) were first presented in the pioneering paper \cite{RGG97}.
\item\emph{MALA:} This has \(\nu=3\). For the i.i.d. case with \(\lambda_i=1\), where \(\gamma=1\), \(K=1\), the scaling is therefore
\(h_d = \ell/d^{1/6}\), i.e.\ \(\delta_d = \ell^2/d^{1/3}\). Under this scaling, the limit expected acceptance rate is
\begin{equation*}
 A = 2 \Phi\left(-\frac{\ell^3\sqrt{\Sigma}}{2}\right),
\end{equation*}
where \(\Sigma = \bbE(\alpha^2)\) with \(\alpha\) given in \eqref{eq:alphamala}.

 For the particular case \(m=1\),  the scaling of \(\delta_d\) and the formula for \(A\) were first derived in \cite[Theorem~1]{RR98} (see also \cite{RR01}). In that particular case, noting that \(p\sim N(0,1)\) and integrating by parts in the integral for the expectation,  one may show that
\begin{equation*}
\Sigma = \frac{1}{48} \bbE\big(5V^{\prime\prime\prime}(q)^2+3 V^{\prime\prime}(q)^3\big).
\end{equation*}

For the scaling \(\lambda_i = 1/i^\kappa\), the theorem gives  \(h_d = \ell/d^{\kappa+1/6}\) or \(\delta_d = \ell^2/d^{2\kappa+1/3}\), a result that, for the particular case \(m=1\), was derived in \cite[Theorem~5.2]{BS09}; this reference does not identify the value of \(A\), which, as we shall discuss later,  is required to derive optimal scaling results and was given in \cite{BRS09}.
\item\emph{MAIMLA:} The only difference with MALA is in the value of \(\Sigma\). A comparison between \eqref{eq:alphamala} and \eqref{eq:alphamaimla} suggests that, in general, \(|\alpha_{MALA}|> |\alpha_{MAIMLA}|\) so that MAIMLA enjoys larger acceptance rates for a given  steplength. Numerical experiments confirm this \cite{DSZinprep}.

\end{itemize}

\subsection{Squared Jumping Distance}

According to \eqref{eq:main}, the expected acceptance rate approaches \(100\%\) as \(\ell\rightarrow 0\). However very small values of \(\ell\) entail proposals that are too close to the current state and result in large correlations in the Markov chain. To analyse the optimal choice of \(\ell\), it is useful to consider the squared jumping distance defined as
\begin{equation}\label{eq:sjd}
{\cal SJD}_d = \frac{1}{d} \sum_{i=1}^d \bbE(\|q_i^\star-q_i\|^2 \bbI_{\{a_d(Q,P,h_d) \leq U\}}),\qquad U\sim{\rm Uniform}(0,1);
\end{equation}
here, a star denotes  proposal and the indicator function determines whether the proposal is accepted or otherwise. Note the averaging over the different \(q_i\).

For the RWM algorithm applied to the target \eqref{eq:target}, \rev{\(q^\star-q = h p\)}. For algorithms (including MALA and MAIMLA) based on integrators consistent with the Hamiltonian system \eqref{eq:hamiltonianequations},
\rev{\(q^\star-q = h p+ o(h)\)}. More generally, we introduce the following condition:

\emph{Condition 3.} When the integrator is applied to the target \eqref{eq:target}, as \(h\rightarrow 0\),
\begin{equation*}
r(h): = \bbE\left(\|\frac{1}{h}(q^\star-q)\|^2-\|p\|^2 \right) \rightarrow 0.
\end{equation*}
(Here \(q \sim \exp(-V(q))\) and \(p \sim N(0,I_m)\).)

This condition is easily checked by Taylor expansion of \rev{\(q^\star\)} as a function of \(h\). In particular, it will \typo{hold} if \(\nabla V(q)\) grows at most as a polynomial and \(\exp(-V(q))\) has tails that decrease exponentially.

The next result identifies the limit of the squared jumping distance:
\begin{theorem}\label{th:sjd} Under the hypotheses of Theorem~\ref{th:main}, assume in addition that Condition 3 is satisfied. Then
\begin{equation*}
\lim_{d\rightarrow \infty}d^{\gamma/\nu} {\cal SJD}_d = m\ell^2 A= 2m\ell^2 \Phi\left(-\frac{\ell^{\nu}\sqrt{K\Sigma}}{2}\right).
\end{equation*}
\end{theorem}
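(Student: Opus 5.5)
The plan is to reduce the squared jumping distance to a product of a per-component jump size and the limit acceptance probability $A$ from Theorem~\ref{th:main}. The coupling between the two comes only through a single summand of $\Delta_d$, and we show that this summand becomes negligible.

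\emph{Step 1: remove $U$ and rescale each jump.} First I would integrate out $U$. The event in \eqref{eq:sjd} is the acceptance event, whose conditional probability given $(Q,P)$ is $a_d$. Hence
\[
{\cal SJD}_d=\frac1d\sum_{i=1}^d \bbE\big(\|q_i^\star-q_i\|^2\,a_d\big).
\]
Next I would use equivariance \eqref{eq:equivariance}, with $\bar q_i=q_i/\lambda_i$ distributed as $\propto\exp(-V)$ (Lemma~\ref{lem:changein expectation}). This gives $q_i^\star-q_i=\lambda_i(\bar q_i^\star-\bar q_i)$, where $\bar q_i^\star$ is produced by the unscaled integrator with step $k_i:=h_d/\lambda_i$. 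Therefore
\[
\|q_i^\star-q_i\|^2=h_d^2\,X_i,\qquad X_i:=\big\|k_i^{-1}(\bar q_i^\star-\bar q_i)\big\|^2 .
\]
Since $h_d^2 d^{\gamma/\nu}=\ell^2$, it suffices to show that $\frac1d\sum_i \bbE(X_i a_d)\to mA$.

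\emph{Step 2: decouple $X_i$ from $a_d$.} Write $\Delta_d=\Delta_{(i)}+\Delta_{-i}$, where $\Delta_{(i)}=\Delta(\bar q_i,p_i,k_i)$, and set $a_d^{-i}=\min(1,e^{-\Delta_{-i}})$. The map $x\mapsto\min(1,e^{-x})$ is bounded by $1$ and $1$-Lipschitz, so
\[
|a_d-a_d^{-i}|\le \min(1,|\Delta_{(i)}|).
\]
Because $a_d^{-i}$ is independent of $(q_i,p_i)$, we get
\[
\bbE(X_i a_d)=\bbE(X_i)\,\bbE(a_d^{-i})+O\big(\bbE(X_i\min(1,|\Delta_{(i)}|))\big).
\]
The factor $\bbE(a_d^{-i})$ is handled by repeating the CLT argument of Theorem~\ref{th:main} for $\Delta_{-i}$. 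The omitted term has mean and variance bounded by $C k_i^{2\nu}$ (Theorem~\ref{th:acta}). By \eqref{eq:conditionmax}, $\max_i k_i^{2\nu}=\ell^{2\nu}d^{-\gamma}\max_i\lambda_i^{-2\nu}\to0$, so $\bbE(a_d^{-i})\to A$ uniformly in $i$. For the first factor, Condition 3 gives $\bbE(X_i)=m+r(k_i)\to m$, again uniformly in $i$ since $\max_i k_i\to0$.

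\emph{Step 3 (main obstacle): the error term tends to zero uniformly in $i$.} Condition 3 only controls first moments of $X$, so Cauchy--Schwarz is not available. Instead I would prove that the family $X^{(k)}=\|k^{-1}(q^\star-q)\|^2$, $0<k\le k_0$, is uniformly integrable. Consistency (Condition 1 / Taylor expansion) gives $X^{(k)}\to\|p\|^2$ pointwise as $k\to0$. Condition 3 gives $\bbE X^{(k)}\to\bbE\|p\|^2$. Since these variables are nonnegative, Scheffé's lemma yields $X^{(k)}\to\|p\|^2$ in $L^1$, and hence uniform integrability for small $k$.

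With uniform integrability in hand, the factor $\min(1,|\Delta_{(i)}|)$ is bounded by $1$ and tends to zero in probability uniformly in $i$, by Chebyshev with $s(k_i)\le Ck_i^{2\nu}$ (Condition 2 / Theorem~\ref{th:acta}). A uniformly integrable family times a bounded factor that vanishes in probability has vanishing expectation. Hence
\[
\sup_i \bbE\big(X_i\min(1,|\Delta_{(i)}|)\big)\to0 .
\]
Averaging over $i$ then gives $\frac1d\sum_i\bbE(X_i a_d)\to mA$. Multiplying by $\ell^2$ yields the claimed limit $m\ell^2A=2m\ell^2\Phi(-\ell^\nu\sqrt{K\Sigma}/2)$.
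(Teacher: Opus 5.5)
Your argument is correct, modulo one hypothesis noted below, but it is organised differently from the paper's.

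\textbf{The paper's route.} After the same first step, the paper splits $X_i=\|p_i\|^2+(X_i-\|p_i\|^2)$ and treats the main part collectively. By the law of large numbers $\frac1d\sum_i\|p_i\|^2\to m$ in probability, and $a_d$ converges in distribution by Theorem~\ref{th:main}. Slutsky plus bounded second moments then give $mA$. The dependence between jump size and acceptance never has to be confronted, because the averaged jump size is asymptotically constant. The remainder is dismissed by noting $r(h_d/\lambda_i)\to0$ uniformly in $i$.

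\textbf{Your route.} You decouple each component separately, using the leave-one-out variable $a_d^{-i}$, independence, and the $1$-Lipschitz bound. This is essentially the technique of \cite[Proposition~3.8]{BPRSSS13}, which the paper only mentions for the preconditioned case.

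\textbf{What each buys.} The paper's proof is shorter. Yours gives the stronger statement $\bbE(\|q_i^\star-q_i\|^2a_d)=h_d^2(mA+o(1))$ uniformly in $i$, and this adapts at once to the per-component ${\cal SJD}^i_d$ asymptotics that the paper states without proof. Your Scheffé/uniform-integrability step also repairs a point the paper glosses over. The paper's remainder contains the random factor $a_d$, while Condition~3 controls only $\bbE(X_i-\|p_i\|^2)$. What is actually needed there is something like $\bbE|X^{(k)}-\|p\|^2|\to0$, and that is exactly what your Scheffé step supplies.

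\textbf{The extra hypothesis.} The price is the pointwise convergence $X^{(k)}\to\|p\|^2$. This does not follow from Condition~1, which concerns $\Delta$ rather than $q^\star$. Nor does it follow from \eqref{eq:consistency} when $\eta=0$, which only gives $q^\star\to q$. It does hold for RWM and for any integrator with $q^\star=q+hp+o(h)$, which is how the paper intends Condition~3 to be verified. You should therefore state it as an explicit mild assumption, or simply assume uniform integrability of $X^{(k)}$ directly.

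\textbf{A simplification.} Uniform convergence $\bbE(a_d^{-i})\to A$ needs no rerun of the CLT. By Condition~2, $|\bbE a_d-\bbE a_d^{-i}|\le\bbE|\Delta_{(i)}|\le\sqrt{s(k_i)}\le C\max_j k_j^{\nu}$, and this tends to zero.
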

The proof may be seen in Section~\ref{sec:proof}.

The theorem suggests that \(\ell\) should be chosen so as to  maximize
\begin{equation*}
E = \ell^2 A(\ell) = 2 \ell^2 \Phi\left(-\frac{\ell^{\nu}\sqrt{K\Sigma}}{2}\right).
\end{equation*}
In practice \(\Sigma\) will be unavailable; however it is well known and remarkable that, in situations like this, the optimization problem for \(E\) may be solved by using \(A\) as an independent variable in lieu of \(\ell\). In terms of \(A\),
\begin{eqnarray}\label{eq:EA}
E =  \frac{2^{2/\nu}}{(K\Sigma)^{1/\nu}}A \left(\Phi^{-1}(1-\frac{A}{2})\right)^{2/\nu} ,
\end{eqnarray}
and clearly the value \(A_{\rm opt}\) that maximizes \(E\), while changing with \(\nu\), is independent of \(\Sigma\) and \(K\) and therefore \emph{of the target and the scaling constants}. Values of \(A_{\rm opt}\) are provided in Table~\ref{table}. Recall that necessarily reversible integrators cannot have an even value of \(\nu\).

\begin{table}[ht]
\centering
\begin{tabular}{lc}
\toprule
$\nu$ & $A_{\rm opt}$ \\
\midrule
1 & 0.234 \\
3 & 0.574 \\
5 & 0.704 \\
7 & 0.773 \\
9 & 0.816\\
\bottomrule
\end{tabular}
\caption{Optimal acceptance probability as a function of $\nu$.}
\label{table}
\end{table}

With implicit algorithms, like MAIMLA, enlarging the value of the  step length \(h\) may result in the need for more iterations to solve the nonlinear equations at each step. In that scenario, maximizing the squared jumping distance per step as above will be different from maximizing the squared jumping distance per unit of computational work. It is likely that in practice for implicit algorithms the acceptance rate should be chosen to be slightly higher than the values we just quoted.
\begin{remark}
The samplers we have been discussing, although presented in a Hamiltonian framework, are not to be confused with Hamiltonian Monte Carlo (HMC) algorithms. These also integrate the Hamiltonian dynamics with a volume preserving, reversible discretisation, but to generate a proposal, rather than taking a single time step with \(\psi_h\), they integrate over a fixed time interval \(0\leq t \leq T\). Therefore the  error relevant to accept/reject is the error after many {\color{black} time} steps, i.e.\ the global error. A result like Theorem~\ref{th:acta} holds, but the value of \(\nu\) is \emph{even} (Verlet has \(\nu = 2\)). The scaling is \(h_d = \ell/d^{1/(2\nu)}\) in the i.i.d. case. The \lq\lq mean equals half the variance property holds\rq\rq\ which explains why the acceptance probability has an expression of the form \eqref{eq:main}. However in HMC the quantity to be maximized for optimal scaling is \(\ell A\) rather than \(\ell^2A\) (the progress of the chain is determined by the duration \(T\) and the number of time steps to get a proposal behaves like \(\ell^{-1}\)).
\end{remark}
\subsection{Preconditioned dynamics}\label{sec:preconditioning}

To sample from the target \eqref{eq:target}, one may consider, rather than \eqref{eq:extendedtarget}, the extended target
\begin{equation}\label{eq:extendedtargetmass}
\Pi^M(q,p)\propto \exp(-H^M(q,p)),\qquad H^M(q,p) = \frac{1}{2}p^TM^{-1} p+V(q),\qquad (q,p)\in\bbR^m\times\bbR^m,
\end{equation}
where \(M\) is a positive definite symmetric \(m\times m\) matrix, referred to as mass matrix. The auxiliary momentum variable \(p\) now has distribution \(N(0,M)\) and Hamilton's equations  read
\begin{equation}\label{eq:hamiltonianequationsM}
\frac{d}{dt} q = M^{-1}p,\qquad\qquad \frac{d}{dt}p = -\nabla V(q).
\end{equation}
The integrators described in Section~\ref{sec:framework} are easily extended to this more general setting and, after defining the acceptance probability
\begin{equation}\label{eq:acceptanceprobabilityM}
a(q,p,h) = \min(1,\exp(-\Delta^M(q,p,h))), \qquad \Delta^M(q,p,h) = H^M(\textcolor{blue}{q^\star},p^\star)-H^M(q,p),
\end{equation}
and considering the marginal on \(q\), they
generate Markov chains  reversible with respect to the target \eqref{eq:target}.

For reasons of brevity, we only write down the formulas  for the \typo{St\"{o}rmer--Verlet} integrator. This  reads:
\begin{equation*}
p_{1/2} = p -\frac{h}{2}\nabla V(q), \qquad q^\star = q+hM^{-1}p_{1/2},\qquad p^\star = p_{1/2} -\frac{h}{2}\nabla V(q^\star).
\end{equation*}
Elimination of \(p_{1/2}\) and \eqref{eq:hdelta} yield
\begin{equation*}
q^\star = q - \frac{\delta}{2} M^{-1}\nabla V(q)+ \sqrt{\delta} M^{-1}p, \qquad p \sim N(0,M),
\end{equation*}
or
\begin{equation*}
q^\star = q - \frac{\delta}{2} M^{-1}\nabla V(q)+ \sqrt{\delta} M^{-1/2}z, \qquad \rev{z} \sim N(0,I_m),
\end{equation*}
a consistent discretisation of the \emph{preconditioned} Langevin equation
\begin{equation*}
dq(t) = -\frac{1}{2}M^{-1}\nabla V(q(t))\,dt+ M^{-1/2}\, dW(t),
\end{equation*}
which preserves the target \eqref{eq:target}. Thus the introduction of mass matrices different from \(I_m\) in the Hamiltonian formulation provides preconditioning for the dynamics.

Let us study the effect of scaling the potential as in Section~\ref{sec:scaling}. Assume that to sample from \(V(q/\lambda)\), we choose the mass matrix \(M\) to be \(\lambda^{-2}I_m\). From a physical point of view, this means increasing the mass  as \(\lambda\) decreases and the potential becomes steeper; this avoids large values of the velocity \((d/dt) q\), see  \cite{BPSSS11, CSSS22} for further discussion of this point. For this choice of \(M\) the differential equations \eqref{eq:hamiltonianequationsM} are
\begin{equation*}
\frac{d}{dt} q = \lambda^2 p,\qquad\qquad \frac{d}{dt}p = -\frac{1}{\lambda}\nabla V(q/\lambda),
\end{equation*}
and the scaling of variables
\begin{equation*}
\bar{q} = q/\lambda,\qquad \bar{p} = \lambda p
\end{equation*}
transforms them into
\begin{equation*}
\frac{d}{dt} \bar{q} = \bar{p},\qquad\qquad \frac{d}{dt}\bar{p} = - \nabla V(\bar{q}),
\end{equation*}
i.e.\ the Hamiltonian differential equations for the unscaled potential when the mass matrix is chosen to be the identity. Therefore, for an equivariant integrator and \(M = \lambda^{-2}I_m\):
\begin{equation*}
\Delta^M(q,p,h) = \Delta(q/\lambda,\lambda p, h),
\end{equation*}
where \(\Delta\) corresponds to the integrator as applied to \eqref{eq:extendedtarget}. In addition we have the following analogue of Lemma \ref{lem:changein expectation}, whose elementary proof will not be given:

\begin{lemma} Let \(F(q,p)\) be a function defined in \(\bbR^m\times\bbR^m\) and \(\lambda>0\) and define \(M = \lambda^{-2}I_m\). The expectation of
\(F(q/\lambda,\lambda p)\) when
\((q,p)\) is distributed according to the density \(\propto \exp(-H^M(q,p))\) coincides with the expectation of \(F(q,p)\) when
\((q,p)\) has density \(\propto \exp(-H(q,p))\).
\end{lemma}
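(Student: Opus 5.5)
The plan is to imitate the proof of Lemma~\ref{lem:changein expectation} and perform the change of variables \(\bar q = q/\lambda\), \(\bar p = \lambda p\) in both the numerator and the denominator of the ratio that defines the expectation. Since \(M^{-1} = \lambda^2 I_m\), we have
\(H^M(q,p) = \frac{\lambda^2}{2}\|p\|^2 + V(q/\lambda)\). The expectation of \(F(q/\lambda,\lambda p)\) under the density \(\propto\exp(-H^M)\) is therefore
\begin{equation*}
\frac{\int_{\bbR^m\times\bbR^m} F(q/\lambda,\lambda p)\,\exp\bigl(-\tfrac{\lambda^2}{2}\|p\|^2 - V(q/\lambda)\bigr)\,dq\,dp}
{\int_{\bbR^m\times\bbR^m} \exp\bigl(-\tfrac{\lambda^2}{2}\|p\|^2 - V(q/\lambda)\bigr)\,dq\,dp}.
\end{equation*}

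Under this change of variables the exponent becomes \(-\tfrac12\|\bar p\|^2 - V(\bar q) = -H(\bar q,\bar p)\), and the integrand of the numerator becomes \(F(\bar q,\bar p)\). The Jacobian of the linear map \((q,p)\mapsto(\bar q,\bar p)\) is the constant \(\lambda^{-m}\cdot\lambda^{m} = 1\), so \(dq\,dp = d\bar q\,d\bar p\); in any case a constant factor would cancel between numerator and denominator. The quotient thus becomes
\begin{equation*}
\frac{\int F(\bar q,\bar p)\exp(-H(\bar q,\bar p))\,d\bar q\,d\bar p}{\int \exp(-H(\bar q,\bar p))\,d\bar q\,d\bar p},
\end{equation*}
which is exactly the second expectation in the statement.

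The only point that needs care is integrability. I would state the lemma for measurable \(F\) that is either nonnegative or integrable with respect to one of the two densities. The same change of variables, applied to \(|F|\), shows that integrability under one density is equivalent to integrability under the other, so both sides are finite or infinite together. I do not expect a genuine obstacle; the step that most needs checking is that \(\exp(-H^M)\) is normalizable whenever \(\exp(-H)\) is, and this also follows from the same substitution.
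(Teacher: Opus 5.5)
Your proof is correct and is the argument the paper has in mind. The paper omits it as elementary, but it is the direct analogue of the proof of Lemma~\ref{lem:changein expectation}: substitute \(\bar q = q/\lambda\), \(\bar p=\lambda p\) (unit Jacobian) in numerator and denominator, and you have also correctly read \(H^M\) here as \(\frac{\lambda^2}{2}\|p\|^2+V(q/\lambda)\), with the scaled potential from the surrounding discussion, which is what makes the identity hold.
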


Therefore, if to sample from the product target \eqref{eq:product target} in \(\bbR^N\), we use the Hamiltonian
\begin{equation}\label{eq:HM}
{\cal H}^M(Q,P) = {\cal H}^M(q_1,\dots,q_d,p_1\dots,p_d) = \frac{1}{2} \sum_{i=1}^d  \|\lambda_ip_i\|^2+\sum_{i=1}^d V(q_i/\lambda_i),
\end{equation}
then the energy error to accept/reject will satisfy
\begin{equation*}
\Delta_d^M(Q,P,h) = \sum_{i=1}^d \Delta^M(q_i,p_i,h) = \sum_{i=1}^d \Delta(q_i/\lambda,\lambda p_i,h).
\end{equation*}
This is very similar to \eqref{eq:Deltad} and reduces the study of \(\Delta_d^M\) to the study of the energy errors for the potential \(V\) and unit mass matrix.
However, we note that here and as distinct from \eqref{eq:Deltad}, a common value of \(h\) features in the different \(\Delta(q_i/\lambda,\lambda p_i,h)\) being summed: preconditioning ensures that the different \(q_i\) evolve in the same time scale.
Therefore the present situation is analogous to the setting where in Theorem~\ref{th:main} all the \(\lambda_i\) take the value 1. By mimicking the proof of Theorem~\ref{th:main}, one derives the following result:

\begin{theorem}\label{th:mainM}Consider the MCMC algorithm to sample from the product \eqref{eq:product target} based on an equivariant, volume preserving and reversible integrator for the Hamiltonian equations for the Hamiltonian function \eqref{eq:HM}. Assume that the integrator, when applied with unit mass matrix to the potential \(V\) in \(\bbR^m\) satisfies Conditions 1 and 2.
Choose \(h_d = \ell/d^{1/(2\nu)}\), for some constant \(\ell>0\).
Then, at stationarity, the expectation  of the acceptance probability  converges to the limit
\begin{equation*}
A = 2 \Phi\left(-\frac{\ell^{\nu}\sqrt{\Sigma}}{2}\right),
\end{equation*}
where  \(\Sigma\) is defined in Theorem~\ref{th:acta}.
\end{theorem}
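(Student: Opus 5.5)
The plan is to reduce Theorem~\ref{th:mainM} to the i.i.d.\ case of Theorem~\ref{th:main}. The key identity is the one just derived:
\[
\Delta^M_d(Q,P,h)=\sum_{i=1}^d \Delta(q_i/\lambda_i,\lambda_i p_i,h).
\]
Under the extended target $\propto\exp(-{\cal H}^M)$, the pairs $(q_i,p_i)$ are mutually independent. By the preceding lemma (applied to each $i$ with $\lambda=\lambda_i$), each transformed pair $(\bar q_i,\bar p_i)=(q_i/\lambda_i,\lambda_i p_i)$ is distributed according to $\propto\exp(-H(\bar q,\bar p))$, the unscaled, unit-mass extended target. Since the map is a bijection acting componentwise, the $\bar q_i,\bar p_i$ are i.i.d.\ with this common law. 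Hence
\[
\Delta^M_d \stackrel{d}{=} \sum_{i=1}^d \Delta(\bar q_i,\bar p_i,h_d),
\]
which is exactly the random variable \eqref{eq:Deltad} in the case $\lambda_i\equiv 1$. In that case $\gamma=1$ and $K=1$, and \eqref{eq:conditionmax} holds trivially because $d^{-1}\cdot 1\to 0$. The step size $h_d=\ell/d^{1/(2\nu)}$ matches the choice of Theorem~\ref{th:main}.

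Once this is done, one could simply quote Theorem~\ref{th:main}. Its hypotheses concern only the law of the row sums together with Conditions 1 and 2 for $V$, and these are precisely the assumptions made here. For completeness, I would also sketch the direct argument.

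\begin{itemize}
\item Consider the triangular array $X_{d,i}=\Delta(\bar q_i,\bar p_i,h_d)$, $i=1,\dots,d$. Its entries are i.i.d.\ within each row, with mean $\mu(h_d)$ and variance $\sigma^2(h_d)$.
\item By Theorem~\ref{th:acta}, $d\,\mu(h_d)=\ell^{2\nu}\,\mu(h_d)/h_d^{2\nu}\to \ell^{2\nu}\Sigma/2$ and $d\,\sigma^2(h_d)\to\ell^{2\nu}\Sigma$.
\item Apply the Lindeberg--Feller CLT to the centered array. To verify the Lindeberg condition, I would use the stronger Lyapunov condition with third moments. By Condition 2, for $h_d\le 1$,
\[
\sum_{i=1}^d \bbE|X_{d,i}|^3\le d\,h_d^{3\nu}\,\bbE D=\ell^{3\nu}d^{-1/2}\,\bbE D\to 0 .
\]
The centered third moments are controlled by the same bound up to a constant, since $|\mu(h_d)|^3=O(d^{-6})$. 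This shows $\Delta^M_d\Rightarrow N(\ell^{2\nu}\Sigma/2,\ \ell^{2\nu}\Sigma)$.
\end{itemize}

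Finally, $u\mapsto\min(1,e^{-u})$ is bounded and continuous, so weak convergence gives $\bbE(a_d)\to\bbE\min(1,e^{-Z})$ with $Z\sim N(\sigma^2/2,\sigma^2)$ and $\sigma^2=\ell^{2\nu}\Sigma$. The standard computation of this expectation, $\Phi(-\sigma/2)+e^{-\sigma^2/2+\sigma^2/2}\,\Phi(-\sigma/2)=2\Phi(-\sigma/2)$, relies on the mean being half the variance. It gives $A=2\Phi(-\ell^\nu\sqrt{\Sigma}/2)$.

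The only genuinely delicate point is the distributional identification in the first step. One must check that equivariance really yields $\Delta^M(q,p,h)=\Delta(q/\lambda,\lambda p,h)$ with the \emph{same} $h$, and not $h/\lambda$. This follows because preconditioning with $M=\lambda^{-2}I_m$ transforms the dynamics without rescaling time. After that, the CLT step is routine, and Condition 2 was designed to make it so.
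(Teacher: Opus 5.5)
Your proposal is correct and follows the paper's route. The identity $\Delta_d^M=\sum_i\Delta(q_i/\lambda_i,\lambda_i p_i,h)$, combined with the change-of-variables lemma, reduces everything to the case $\lambda_i\equiv 1$ of Theorem~\ref{th:main}, which is what the paper means by ``mimicking'' that proof. Your optional Lyapunov check via Condition 2 is a slightly cleaner substitute for the paper's Lindeberg/dominated-convergence argument. One small slip there: $|\mu(h_d)|^3=O(d^{-3})$, not $O(d^{-6})$, though this is still harmless.
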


Note that there are \emph{no hypotheses on the} \(\lambda_i\) and that, when preconditioning is used, the scaling for \(h_d\) coincides with the one found in the nonpreconditioned case when all the \(q_i\) share a common distribution. In the particular case of the MALA algorithm with \(m=1\) and \(\lambda_i = 1/i^\kappa\) the formula for \(h_d\) was presented in \cite[Theorem~5.2]{BS09}, and the value of  \(A\) was given in \cite{BRS09}.

Let us now discuss the squared jumping distance for preconditioned algorithms. With preconditioning, due to the presence of the mass matrix, \(q_i^\star-qi \approx h_d \lambda_i^2 p_i\) and \(p_i\sim N(0,(1/\lambda_i^2)I_m)\). Therefore, we would expect that
\begin{equation*}
\bbE(\|q_i^\star-q_i\|^2) \approx m h_d^2 \lambda_i^2;
\end{equation*}
in this way
the mean quadratic displacement from current state to proposal varies with \(i\) and is proportional to \(\lambda_i^2\), and therefore to the second moment of \(q_i\), which has distribution \(\exp(-V(q_i/\lambda_i))\). (Achieving this proportionality is precisely the motivation for preconditioning.)
The last display should be compared with the non-preconditioned case where (cf.\ Condition 3) regardless of the value of \(i\)
\begin{equation*}
\bbE(\|q_i^\star-q_i\|^2) \approx m h_d^2.
\end{equation*}
Due to this difference, the metric defined in \eqref{eq:sjd}, based on an average with respect to \(i\), makes little sense for preconditioned dynamics. One may instead use, for \(i\leq d\), the alternative metric
\begin{equation*}
{\cal SJD}_d^i = \bbE(\|q_i^\star-q_i\|^2 \bbI_{\{a_d(Q,P,h_d) \leq U\}}),\qquad U\sim{\rm Uniform}(0,1).
\end{equation*}

With the scaling \(h_d= \ell/d^{1/(2\nu)}\) from Theorem~\ref{th:mainM}, one has
\begin{eqnarray*}
{\cal SJD}_d^i \sim m \lambda_i^2 d^{-1/\nu}\ell^2 A, \quad d \rightarrow \infty.
\end{eqnarray*}
The main difficulty in proving this estimate stems from  \(\|q_i^\star-q_i\|^2\) and \(\bbI_{\{\Delta_d(Q,P,h_d) \leq U\}}\)  \emph{not} being independent. This obstacle may be circumvented by using the technique in the proof of \cite[Proposition~3.8]{BPRSSS13}, that takes into account that, for \(d\) large, \(\bbI_{\{\Delta_d(Q,P,h_d) \leq U\}}\)  is \lq almost\rq\ independent of the single component \((q_i,p_i)\). Details will not be given.
The  proof used in this paper for Theorem~\ref{th:sjd} does not have to address the dependence issue because it deals with an average  \((1/d)\sum_i\|q_i^\star-q_i\|^2\approx (1/d)h_d^2\sum_i\|p_i\|^2\) whose limit, being constant by the weak law of large numbers, is stochastically independent from the indicator function.

  In any case, as in the non-preconditioned case, the best value of \(\ell\) is determined by maximising \(\ell^2A\). As discussed above, the maximum is achieved for the values reported in Table~\ref{table}.

\section{General Metropolis--Hastings algorithms}\label{sec:optimal_MH}

{\color{black}
The preceding material and the construction of MCMC algorithms with enhanced scaling properties to be discussed in the next section depend on using results from the Hamiltonian formalism and the integration of Hamiltonian systems. In particular, the property \lq\lq mean equals half the variance\rq\rq\ in Theorem~\ref{th:acta}, which is the foundation of the main scaling result in Theorem~\ref{th:main}, has been proved using the preservation of volume and reversibility properties of the integrator. It is remarkable that, as shown in \cite{VK21} (see also \cite{VLZ23}) the
\lq\lq mean equals half the variance\rq\rq\ property may be proved as a direct consequence of the Metropolis-Hastings formula and therefore applies to \lq\lq general\rq\rq\ MCMC algorithms, i.e.\ to cases that cannot be treated within the Hamiltonian framework.
In this section we comment on how to extend the material in the preceding sections to a non-Hamiltonian setting. For reasons of brevity  we limit ourselves to a terse summary and omit a number of technical details.
}

For the target \eqref{eq:target}, consider a proposal of $q^\star$ from a position $q$ according to the probability density function $\wp_h(q,q^\star)$. Here $h$ is a small parameter for the algorithm, perhaps related to the \rev{standard deviation} of the proposal. Acceptance is based on the ratio
\begin{equation*}
     \frac{\exp(-V(q^\star))\: \wp_h(q^\star,q)}{\exp(-V(q))\:\wp_h(q,q^\star)};
\end{equation*}
swapping \(q\) and \(q^\star\) takes this ratio into its inverse.
To parallel the notation of the preceding sections, we introduce
\begin{equation*}
    \Delta(q,q^\star,h) = V(q^\star)-V(q)-\log\left(\frac{\wp_h(q^\star,q)}{\wp_h(q,q^\star)}\right)
\end{equation*}
and then the acceptance probability is $\min\{1,\exp(-\Delta(q,q^\star,h))\}$. Note that
\begin{equation}\label{eq:oppositeqstar}
\Delta(q,q^\star,h) = -\Delta(q^\star,q,h);
\end{equation}
 this  just rephrases the symmetry of the Metropolis-Hastings formula, but  is the foundation of all subsequent developments. We encountered a similar relation in  \eqref{eq:opposite} when working in the Hamiltonian framework.

The relation \eqref{eq:oppositeqstar}  implies that when computing \(\bbE(\Delta)\) much cancellation takes place. More precisely, after defining
\begin{equation*}
    \mu(h) = \mathbb{E}[\Delta(q,q^\star,h)], \qquad s(h) = \mathbb{E}[\Delta(q,q^\star,h)^2], \qquad \sigma^2(h)=s(h)-\mu(h)^2,
\end{equation*}
(expectations $\mathbb{E}$ are with respect to the probability measure \(\Pi(q,q^\star)\propto\exp(-V(q))\wp(q,q^\star)dq dq^\star\))
we have the following
 analogue to Theorem~\ref{th:acta} {\color{black}(cf.\ the similar developments in \cite[Section~3]{VK21}). The proof is given in Section~\ref{sec:proof}.}
\begin{theorem}\label{thm:mean_variance_relation}
We will make the following assumption: for some \(\nu>0\),
 \(\Sigma = \lim_{h\to 0} s(h)/{h^{2\nu}}\) exists is finite and nonzero. Moreover, there exists a real function $D(q,q^\star)$ such that
\begin{equation*}
    \sup_{0< h\leq 1} \frac{|\Delta(q,q^\star,h)|^3}{h^{3\nu}} \leq D(q,q^\star)
\end{equation*}
and $D$ is integrable with respect to the probability measure \(\Pi(q,q^\star)\).
 Then
\begin{equation*}
    \lim_{h\to 0} \frac{\mu(h)}{h^{2\nu}} =\frac{\Sigma}{2}, \qquad \lim_{h\to 0} \frac{\sigma^2(h)}{h^{2\nu}} = \Sigma.
\end{equation*}
\end{theorem}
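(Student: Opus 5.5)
The plan rests on the identity \(\bbE[e^{-\Delta}]=1\), which follows directly from the antisymmetry \eqref{eq:oppositeqstar}. Write \(\pi(q)\propto e^{-V(q)}\). The measure \(\Pi\) has density \(\pi(q)\wp_h(q,q^\star)\). By the definition of \(\Delta\),
\[
e^{-\Delta(q,q^\star,h)}\,\pi(q)\wp_h(q,q^\star)=\pi(q^\star)\wp_h(q^\star,q).
\]
Integrating over \((q,q^\star)\) and swapping the names of the variables gives
\[
\bbE\big[e^{-\Delta}\big]=\int\pi(q^\star)\wp_h(q^\star,q)\,dq\,dq^\star=1 .
\]
Strictly, this only requires \(\wp_h(q^\star,q)>0\) wherever \(\wp_h(q,q^\star)>0\). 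If that fails, the integral is at most \(1\), and we will assume positivity, as is implicit in \(\Delta\) being finite \(\Pi\)-a.e. In the Hamiltonian setting this identity plays the role of volume preservation plus reversibility. Everything else is a Taylor expansion of the exponential.

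Next, expand \(e^{-x}=1-x+\tfrac{x^2}{2}+R(x)\). The remainder satisfies \(|R(x)|\le \tfrac{|x|^3}{6}\max(1,e^{-x})\). Taking expectations, the identity gives
\[
0=-\mu(h)+\tfrac12 s(h)+\bbE[R(\Delta)].
\]
Dividing by \(h^{2\nu}\), it suffices to show \(\bbE[R(\Delta)]/h^{2\nu}\to0\). Since \(s(h)/h^{2\nu}\to\Sigma\) by assumption, this yields \(\mu(h)/h^{2\nu}\to\Sigma/2\). Then \(\mu(h)^2/h^{2\nu}=h^{2\nu}(\mu/h^{2\nu})^2\to0\), so \(\sigma^2(h)/h^{2\nu}\to\Sigma\) follows immediately.

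The main obstacle is the remainder, because of the factor \(e^{-\Delta}\) when \(\Delta<0\). The cubic bound alone handles \(\Delta\ge0\): there \(|R|\le|\Delta|^3/6\le h^{3\nu}D/6\), which is \(o(h^{2\nu})\) since \(D\) is integrable. For \(\Delta<0\) we avoid bounding \(e^{-\Delta}\) by exploiting the antisymmetry once more. Let \(g\) be any measurable function and let \(\tilde D(q,q^\star)=D(q^\star,q)\). The same change of variables as above gives
\[
\bbE\big[g(\Delta)\,\bbI_{\{\Delta<0\}}\big]=\bbE\big[g(-\Delta)\,e^{-\Delta}\,\bbI_{\{\Delta>0\}}\big].
\]
(Strictly, the weight produced is \(\pi(q^\star)\wp_h(q^\star,q)\), and the weight on the right-hand side should be this quantity divided by \(\pi(q)\wp_h(q,q^\star)\); the conversion is handled by the same identity as in the first step.) Apply this with \(g(x)=|R(x)|\). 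On \(\{\Delta>0\}\) we have \(|R(-\Delta)|e^{-\Delta}\le \tfrac{|\Delta|^3}{6}e^{\Delta}e^{-\Delta}=|\Delta|^3/6\). So the negative part is also bounded by \(h^{3\nu}\bbE[D]/6\), and both pieces are \(O(h^{3\nu})=o(h^{2\nu})\).

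Should the bookkeeping in the change of variables prove awkward, a cruder fallback is available. One can instead use the bound \(|R(x)|\le |x|^3/6\) for \(x\ge -1\), and on \(\{\Delta<-1\}\) simply bound the terms \(\mu\) and \(s\) by \(\bbE[|\Delta|^3]\le h^{3\nu}\bbE[D]\). That bound kills the linear and quadratic contributions there. This requires rewriting the identity as \(\bbE[(e^{-\Delta}-1+\Delta-\Delta^2/2)\bbI_{\{\Delta\ge-1\}}]\) plus a tail term. The tail \(\bbE[e^{-\Delta}\bbI_{\{\Delta<-1\}}]\) is again controlled by the swap identity, since it equals \(\Pi(\Delta>1)\le h^{3\nu}\bbE[D]\).
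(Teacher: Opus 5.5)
Your argument is correct and is essentially the paper's. Both use the detailed-balance change of variables $\pi(q^\star)\wp_h(q^\star,q)=e^{-\Delta}\pi(q)\wp_h(q,q^\star)$ twice: once to create the cancellation, and once more to replace the weight $e^{-\Delta}$ in the cubic remainder by the unweighted measure, so that $|\mu(h)-\tfrac12 s(h)|$ is bounded by a multiple of $\bbE|\Delta|^3\le h^{3\nu}\bbE D$. The only differences are cosmetic: you start from the normalisation $\bbE[e^{-\Delta}]=1$ with a Lagrange remainder and apply the second swap only on $\{\Delta<0\}$, whereas the paper starts from $\mu(h)=\tfrac12\bbE[\Delta(1-e^{-\Delta})]$ and the bound $|u||e^u-1-u|\le\tfrac12|u|^3(e^u+1)$; your fallback argument is not needed.
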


{\color{black} This result provides a basis for the study of
 the scaling for products as in \eqref{eq:product target}. To simplify the exposition, we only consider the case where the target is given as an i.i.d.\ product, with $\lambda_i=1$ for all $i$.}  (The scaled case may be dealt with via dimensionality arguments as we did in the Hamiltonian framework.) We only give the main idea. The acceptance probability may  be written as
\begin{equation*}
    a_d(Q,Q^\star,h) = \min\{1,e^{-\Delta_d(Q,Q^\star,h)}\}, \qquad \Delta_d = \sum_{i=1}^d \Delta(q_i,q_i^\star,h).
\end{equation*}
As a consequence of Theorem~\ref{thm:mean_variance_relation}, $\{\Delta(q_i,q_i^\star,h)\}_{i=1}^d$ is a sequence of i.i.d. \(d\) random variables with mean approximately $(1/2) h^{2\nu}\Sigma$ and variance approximately $ h^{2\nu}\Sigma$. Setting $h_d = \ell d^{-1/(2\nu)}$ and applying the Central Limit Theorem, the distributions of \(\Delta_d\) will approach the distribution of a normal variable \(\Delta_\infty\) of mean \((1/2)\ell^{2\nu}\Sigma\) and variance \(\ell^{2\nu}\Sigma\) and, by arguing as in Theorem~\ref{th:main}, the limit acceptance probability will have the value
\[
A = 2\Phi\left(-\frac{\ell^\nu\sqrt{\Sigma}}{2}\right).
\]
Of course this is exactly \eqref{eq:main} (for the case at hand with \(K=1\)). We note, {\color{black} once more,} that the expression for \(A\) is a consequence of the \lq\lq mean  equal half the variance\rq\rq\ property of the limit normal variable \(\Delta_\infty\).

{\color{black} For algorithms where, in analogy with Condition 3 above, \(\bbE((q^\star-q)^2)\) is, to leading order, a constant multiple of \(h^2\), the optimal acceptance rate is obtained by maximizing \(\ell^2A\) and
therefore  the values of the optimal acceptance probability reported in Table~\ref{table} are valid in the present scenario.}

As a first, simple  illustration, we apply these considerations to RWM, a sampler we already discussed within the Hamiltonian framework. This has
\begin{equation*}
    \Delta(q,q^\star,h) = V(q^\star)-V(q).
\end{equation*}
Conditional on \(q\), the proposal has a distribution \(N(q, h^2I_m)\) and then, with \(p\sim N(0,I_m)\),
\begin{equation*}
    s(h) = \mathbb{E}[\Delta(q,q^\star)^2] = \mathbb{E}[(V(q+hp)-V(q))^2].
\end{equation*}
Therefore, for suitably smooth \(V\), \(s(h)/h^{2} \) approaches \(\Sigma = \bbE\big(\sum_i (\partial V/\partial q^ip^i)^2\big)\) as \(h\rightarrow 0\); so \(\nu=1\) in Theorem~\ref{thm:mean_variance_relation} and we then recover the RWM results in Section~\ref{sec:main}.

As a second illustration, we consider fMALA, a modification of MALA constructed in \cite{DRVZ17} to achieve  \(\delta_d = \ell^2/d^{1/5}\) scaling in the i.i.d. product case. This scaling for fMALA (and for its  variants) is attained at the price of  using, in addition to the gradient, the Hessian and the tensor of third derivatives of the potential. We  have not been able to  cast fMALA in the Hamiltonian framework, but the material in this section may be applied. By expanding \(s(h)\), one shows that the algorithm satisfies the requirements in Theorem~\ref{thm:mean_variance_relation} with \(\nu=5\); in fact fMALA was constructed by annihilating the lower order terms in the expansion of \(s(h)\) in powers of \(h\). The  material in this section  then implies that the variance of the proposal has to be scaled as \(\delta_d = \ell^2/d^{1/5}\) and the
optimal acceptance rate is 0.704. It is perhaps of interest to observe that \cite{DRVZ17} implicitly proves the \lq\lq mean  equal half the variance\rq\rq\ property (see formulas (47)--(48) in that paper), but does so  by finding  the expansion of \(\bbE(\Delta)\) in powers of \(h\) to order 10, a task that required the use of a symbolic algebra package. Also \cite{DRVZ17}  was limited to the univariate case \(m=1\).

\section{Missing proofs}\label{sec:proof}

{\color{black}\emph{Proof of Theorem~\ref{th:acta}:}
   The proof is based on the observation (see \cite[Proposition~6.1]{BRSS18}) that, for a volume preserving and reversible integrator,
   \begin{equation}\label{eq:new1}
   \int_{\mathbb{R}^m} \int_{\mathbb{R}^m}\Delta(q,p,h) e^{-H(q,p)}dqdp = -\int_{\mathbb{R}^m} \int_{\mathbb{R}^m}\Delta(q,p,h)e^{-\Delta(q,p,h)}e^{-H(q,p)}dqdp
   \end{equation}
   while, for any even real function \(\phi\) of a real variable,
   \begin{equation}\label{eq:new2}
   \int_{\mathbb{R}^m} \int_{\mathbb{R}^m}\phi(\Delta(q,p,h)) e^{-H(q,p)}dqdp = \int_{\mathbb{R}^m} \int_{\mathbb{R}^m}\phi(\Delta(q,p,h))e^{-\Delta(q,p,h)}e^{-H(q,p)}dqdp.
   \end{equation}

   From \eqref{eq:new1},
        \begin{equation*}
        \mu(h)= \frac{1}{2}\int_{\mathbb{R}^m} \int_{\mathbb{R}^m} \Delta(q,p,h) \left(1-e^{-\Delta(q,p,h)}\right)  e^{-H(q,p)} dq dp,
    \end{equation*}
    where we note that, for small \(\Delta\), the integrand is of the order of \(\Delta^2\) since \(1-\exp(-\Delta)\approx \Delta\). To make this observation more precise, we write
    \begin{align*}
        &\frac{1}{h^{2\nu}}\left|\mu(h) - \frac{1}{2}s(h)\right| \leq\\&\qquad\qquad \frac{1}{2}\int_{\mathbb{R}^m} \int_{\mathbb{R}^m} \left|\frac{1}{h^{2\nu}} \Delta(q,p,h) \left(1-e^{-\Delta(q,p,h)}-\Delta(q,p,h)\right) \right| e^{-H(q,p))} dq dp,
    \end{align*}
    use the bound
    \begin{equation}\label{eq:exp_bound}
        |u||e^{u}-1-u| \leq \frac{1}{2}|u|^3(e^u+1), \qquad u\in \mathbb{R},
    \end{equation}
    and proceed as follows:
\begin{align*}
        \frac{1}{h^{2\nu}}\left|\mu(h) - \frac{1}{2}s(h)\right| &\leq \frac{1}{4h^{2\nu}}\int_{\mathbb{R}^m} \int_{\mathbb{R}^m} \left| \Delta(q,p,h)\right|^3
        (e^{-\Delta(q,p,h)}+1)  e^{-H(q,p)} dq dp\\
        & = \frac{1}{4h^{2\nu}}\int_{\mathbb{R}^m} \int_{\mathbb{R}^m} \left| \Delta(q,p,h)\right|^3
        e^{-\Delta(q,p,h)}  e^{-H(q,p)} dq dp\\
        &\qquad\qquad +\frac{1}{4h^{2\nu}}\int_{\mathbb{R}^m} \int_{\mathbb{R}^m} \left| \Delta(q,p,h)\right|^3
        e^{-H(q,p)} dq dp\\
        & = \frac{1}{2h^{2\nu}}\int_{\mathbb{R}^m} \int_{\mathbb{R}^m} \left| \Delta(q,p,h)\right|^3
        e^{-\Delta(q,p,h)}  e^{-H(q,p)} dq dp
    \end{align*}
    The last equality is a consequence of \eqref{eq:new2}.
    Therefore
    \begin{align*}
        \frac{1}{h^{2\nu}}\left|\mu(h) - \frac{1}{2}s(h)\right| \leq & \frac{h^\nu}{2}\int_{\mathbb{R}^m} \int_{\mathbb{R}^m} \frac{\left| \Delta(q,p,h)\right|^3}{h^{3\nu}} e^{-H(q,p)}  dq dp.
    \end{align*}
    By Condition 2 the integral is finite and bounded in $h$, therefore letting $h\to 0$,
     \begin{equation}
        \lim_{h\to 0}\frac{1}{h^{2\nu}}\left|\mu(h) - \frac{1}{2}s(h)\right| =0,
     \end{equation}
    and the result follows easily.
} 
\bigskip

\emph{Proof of Theorem~\ref{th:main}:}

We begin by proving that for the triangular array in \eqref{eq:row}, the variance \({\rm Var}_d \) of the sum \(\Delta_d\) of the independent variables in the \(d\)-th row converges to \(\ell^{2\nu}K\Sigma\) as \(d\rightarrow\infty\). We decompose as
\begin{equation*}
{\rm Var}_d = \sum_{i=1}^d \sigma^2(h_d/\lambda_i) =
\sum_{i=1}^d \left(\sigma^2(h_d/\lambda_i)- \frac{h_d^{2\nu}}{\lambda_i^{2\nu}} \Sigma\right)+
\sum_{i=1}^d \frac{h_d^{2\nu}}{\lambda_i^{2\nu}} \Sigma.
\end{equation*}
and recall from the discussion following Theorem~\ref{th:main}, that the last of these sums converges to \(\ell^{2\nu}K\Sigma\). On the other hand
\begin{eqnarray}
\sum_{i=1}^d \left(\sigma^2(h_d/\lambda_i)- \frac{h_d^{2\nu}}{\lambda_i^{2\nu}} \Sigma\right)& = &
h_d^{2\nu} \sum_{i=1}^d\frac{1}{\lambda_i^{2\nu}} \left(\frac{\sigma^2(h_d/\lambda_i)}
{ \frac{h_d^{2\nu}}{\lambda_i^{2\nu}} }-\Sigma\right)\nonumber\\
&=& \ell^{2\nu}d^{-\gamma}\sum_{i=1}^d\frac{1}{\lambda_i^{2\nu}} \left(\frac{\sigma^2(h_d/\lambda_i)}
{ \frac{h_d^{2\nu}}{\lambda_i^{2\nu}} }-\Sigma\right)\nonumber\\
&\leq&\ell^{2\nu}\left( d^{-\gamma} \sum_{i=1}^d\frac{1}{\lambda_i^{2\nu}}\right) \max_{i=1,\dots,d}\left|\frac{\sigma^2(h_d/\lambda_i)}
 {\frac{h_d^{2\nu}}{\lambda_i^{2\nu}}} -\Sigma\right|.\label{eq:boundvariance}
\end{eqnarray}
Now, according to \eqref{eq:conditionmax}, the quotients \(h_d/\lambda_i\), \(i=1,\dots,d\) converge uniformly to zero:
\begin{equation}\label{eq:uniformity}
\lim_{d\rightarrow \infty}\max_{1,\dots,d}\frac{h_d}{\lambda_i} = \lim_{d\rightarrow \infty} \max_{1,\dots,d} \ell \left(d^{-\gamma}\frac{1}{\lambda_i^{2\nu}}\right)^{1/(2\nu)} = 0.
\end{equation}
Therefore, from Theorem~\ref{th:acta},
\begin{equation*}
\lim_{d\rightarrow\infty}\max_{i=1,\dots,d}\left|\frac{\sigma^2(h_d/\lambda_i)}
 {\frac{h_d^{2\nu}}{\lambda_i^{2\nu}}} -\Sigma\right| = 0,
\end{equation*}
which, in tandem with \eqref{eq:gamma}, implies that \eqref{eq:boundvariance} converges to zero. We conclude that \({\rm Var}_d\rightarrow \ell^{2\nu}K\Sigma\), as we had announced. In a similar way, one proves that the expectation of \(\Delta_d\) converges to \(\ell^{2\nu}K\Sigma/2\).

It remains to show that the Lindeberg condition is fulfilled. Fix \(\epsilon>0\) and consider the centered variables \(\Delta(q_i/\lambda_i,p_i,h_d/\lambda_i)-\mu(h_d/\lambda_i\)), \(i=1,\dots,d\), and the quotient
\begin{equation*}
\frac
{\sum_{i=1}^d\bbE \Big(
\big(\Delta(q_i/\lambda_i,p_i,h_d/\lambda_i)-\mu(h_d/\lambda_i)\big)^2
\bbI_{\{|\Delta(q_i/\lambda_i,p_i,h_d/\lambda_i)-\mu(h_d/\lambda_i)| > \epsilon{\rm Var}_d^{\frac{1}{2}}\}}
\Big)}
{{\rm Var}_d}.
\end{equation*}
As we have just proved, the denominator has a finite limit and therefore our task is to prove that numerator converges to zero.
By using Lemma~\ref{lem:changein expectation}, this numerator may be written as
\begin{equation*}
\sum_{i=1}^d\bbE \Big(
\big(\Delta(q,p,h_d/\lambda_i)-\mu(h_d/\lambda_i)\big)^2
\bbI_{\{|\Delta(q,p,h_d/\lambda_i)-\mu(h_d/\lambda_i)| > \epsilon{\rm Var}_d^{\frac{1}{2}}\}}
\Big),
\end{equation*}
where now \((q,p\)) has density \(\propto\exp(-H(q,p))\), or, after rearrangement

\begin{equation*}
\sum_{i=1}^d
\sigma^2(h_d/\lambda_i)
\bbE \Big(
\Big(\frac{\Delta(q,p,h_d/\lambda_i)-\mu(h_d/\lambda_i)}{\sigma(h_d/\lambda_i)}\Big)^2
\bbI_{\{|\Delta(q,p,h_d/\lambda_i)-\mu(h_d/\lambda_i)| > \epsilon{\rm Var}_d^{\frac{1}{2}}\}}
\Big).
\end{equation*}
Since we know that \(\sum_i \sigma^2(h_d/\lambda_i)\) has a finite limit, the proof will be complete if we show that
\begin{equation*}
\lim_{d\rightarrow \infty} \max_{i=1,\dots,d} \bbE \Big(
\Big(\frac{\Delta(q,p,h_d/\lambda_i)-\mu(h_d/\lambda_i)}{\sigma(h_d/\lambda_i)}\Big)^2
\bbI_{\{|\Delta(q,p,h_d/\lambda_i)-\mu(h_d/\lambda_i)| > \epsilon{\rm Var}_d^{\frac{1}{2}}\}}
\Big)= 0
\end{equation*}
or, recalling the uniform convergence in \eqref{eq:uniformity}, if we prove that
\begin{equation*}
\lim_{t\rightarrow 0} \bbE \Big(
\Big(\frac{\Delta(q,p,t)-\mu(t)}{\sigma(t)}\Big)^2
\bbI_{\{|\Delta(q,p,t)-\mu(t)| > \epsilon{\rm Var}_d^{\frac{1}{2}}\}}
\Big)= 0.
\end{equation*}
In order to do so, apply the dominated convergence theorem, after noting that
from Condition 1 and Theorem~\ref{th:acta},
\begin{equation*}
\lim_{t\rightarrow 0} \frac{\Delta(q,p,t)-\mu(t)}{\sigma(t)} = \frac{\alpha(q,p)}{\sqrt{\Sigma}},
\end{equation*}
and
\begin{equation*}
\lim_{t\rightarrow 0} \bbI_{\{|\Delta(q,p,t)-\mu(t)| > \epsilon{\rm Var}_d^{\frac{1}{2}}\}}  =
\bbI_{\{\frac{|\Delta(q,p,t)-\mu(t)|}{\sigma(t)} > \epsilon\frac{{\rm Var}_d^{\frac{1}{2}}\}}{\sigma(t)}} = 0,
\end{equation*}
because \(\sigma(t)\rightarrow 0\). To dominate the integrand, observe that  \(\Delta^2/\sigma^2\) behaves like \((\Delta/t^\nu)^2\) and is therefore upper bounded by the integrable function \(D^{2/3}\), where \(D\) is from Condition 2.\bigskip

\emph{Proof of Theorem~\ref{th:sjd}:}

By integrating with respect to \(U\):
\begin{equation*}
{\cal SJD}_d = \frac{1}{d} \sum_{i=1}^d \bbE(\|q_i^\star-q_i\|^2 a_d(Q,P,h_d)).
\end{equation*}
Then, we have
\begin{equation*}
d^{\gamma/\nu}{\cal SJD}_d = \ell^2\frac{1}{d} \sum_{i=1}^d \bbE\left(\|\frac{1}{h}(q_i^\star-q_i)\|^2 a_d(Q,P,h_d)\right),
\end{equation*}
an expression that may be decomposed as the sum of
\rev{\begin{equation}\label{eq:aux1}
\ell^2\frac{1}{d} \sum_{i=1}^d \bbE(\|p_i\|^2a_d(Q,P,h_d) ) = \ell^2 \bbE \Big(\frac{1}{d}\sum_{i=1}^d \|p_i\|^2a_d(Q,P,h_d)\Big)
\end{equation}}
and a \rev{remainder}
\begin{equation*}
\ell^2\frac{1}{d} \sum_{i=1}^d \bbE\left(\Big(\|\frac{1}{h}(q_i^\star-q_i)\|^2-\|p_i\|^2\Big) a_d(Q,P,h_d)\right).
\end{equation*}
In \eqref{eq:aux1}, we know from the proof of \rev{Theorem~\ref{th:main}} and the continuous mapping theorem that
\(a_d(Q,P,h)\) \( = \min(1,\exp(-\Delta_d))\) converges in distribution to \(\min(1,\rev{\exp(-\Delta_\infty)})\). The random variable \((1/d)\sum_i \|p_i\|^2\) has a Gamma distribution \rev{with} parameters \(\alpha= dm/2\), \(\theta = 2/d\). It has mean \(m\) and variance \(2m/d\) and therefore \rev{converges} to the constant \(m\) (as one may alternatively have proved by the weak law of large numbers). By Slutsky's theorem the variables \((1/d)\sum_i \|p_i\|^2a_d\) have a distributional limit and, since their second moments are bounded,
we conclude that \eqref{eq:aux1} converges to the limit
\begin{equation*}
m \ell^2 \lim_{d\rightarrow \infty} \bbE(a_d(Q,P,h_d)) = m\ell^2 A = 2 m \ell^2 \Phi\left(-\frac{\ell^{\nu}\sqrt{K\Sigma}}{2}\right).
\end{equation*}

The proof concludes by checking that the \rev{remainder} converges to zero. In fact, from equivariance and Condition 3,
\begin{equation*}
\bbE\left(\|\frac{1}{h}(q_i^\star-q_i)\|^2-\|p_i\|^2 \right) = r(h_d/\lambda_i)
\end{equation*}
and it is therefore enough to recall the uniform convergence in \eqref{eq:uniformity}.\bigskip

{\color{black}
\emph{Proof of Theorem~\ref{thm:mean_variance_relation}:} This is to be compared with the proof of Theorem~\ref{th:acta}.
    By relabelling the variables, we can write
    \begin{align*}
        \mathbb{E}[\Delta(q,q^\star,h)]&= \int_{\mathbb{R}^m} \int_{\mathbb{R}^m} \Delta(q,q^\star,h) \wp_h(q,q^\star) e^{-V(q)} dq^\star dq \\&= \int_{\mathbb{R}^m} \int_{\mathbb{R}^m} \Delta(q^\star ,q,h) \wp_h(q^\star ,q) e^{-V(q^\star )} dqdq^\star .
    \end{align*}
    Hence
    \begin{equation*}
        \mu(h)= \frac{1}{2}\int_{\mathbb{R}^m} \int_{\mathbb{R}^m} \Delta(q,q^\star ,h) \wp_h(q,q^\star ) e^{-V(q)} dq^\star dq + \frac{1}{2}\int_{\mathbb{R}^m} \int_{\mathbb{R}^m} \Delta(q^\star ,q,h) \wp_h(q^\star ,q) e^{-V(q^\star )} dqdq^\star .
    \end{equation*}
    Now, using the definition of $\Delta$, we can rewrite $V(q^\star )$ in terms of $V(q)$:
    \begin{align*}
        \mu(h)&= \frac{1}{2}\int_{\mathbb{R}^m} \int_{\mathbb{R}^m} \Delta(q,q^\star ,h) \wp_h(q,q^\star ) e^{-V(q)} dq^\star dq \\&\qquad\qquad+ \frac{1}{2}\int_{\mathbb{R}^m} \int_{\mathbb{R}^m} \Delta(q^\star ,q,h) \wp_h(q,q^\star ) e^{-V(q)-\Delta(q,q^\star ,h)} dqdq^\star .
    \end{align*}
    By using the symmetry \eqref{eq:oppositeqstar}, we find
        \begin{equation*}
        \mu(h)= \frac{1}{2}\int_{\mathbb{R}^m} \int_{\mathbb{R}^m} \Delta(q,q^\star ,h) \left(1-e^{-\Delta(q,q^\star ,h)}\right) \wp_h(q,q^\star ) e^{-V(q)} dq^\star dq,
    \end{equation*}
    where we note that, for small \(\Delta\), the integrand is of the order of \(\Delta^2\).  We  now write
    \begin{align*}
        &\frac{1}{h^{2\nu}}\left|\mu(h) - \frac{1}{2}s(h)\right| \leq\\&\qquad\qquad \frac{1}{2}\int_{\mathbb{R}^m} \int_{\mathbb{R}^m} \left|\frac{1}{h^{2\nu}} \Delta(q,q^\star ,h) \left(1-e^{-\Delta(q,q^\star ,h)}-\Delta(q,q^\star ,h)\right) \right| \wp_h(q,q^\star ) e^{-V(q)} dq^\star dq,
    \end{align*}
    use the bound \eqref{eq:exp_bound}
    and continue as follows:
\begin{align*}
        \frac{1}{h^{2\nu}}\left|\mu(h) - \frac{1}{2}s(h)\right| &\leq \frac{1}{4h^{2\nu}}\int_{\mathbb{R}^m} \int_{\mathbb{R}^m} \left| \Delta(q,q^\star ,h)\right|^3
        (e^{-\Delta(q,q^\star ,h)}+1) \wp_h(q,q^\star ) e^{-V(q)} dq^\star dq\\
        & = \frac{1}{4h^{2\nu}}\int_{\mathbb{R}^m} \int_{\mathbb{R}^m} \left| \Delta(q,q^\star ,h)\right|^3
        e^{-\Delta(q,q^\star ,h)} \wp_h(q,q^\star ) e^{-V(q)} dq^\star dq\\
        &\qquad\qquad +\frac{1}{4h^{2\nu}}\int_{\mathbb{R}^m} \int_{\mathbb{R}^m} \left| \Delta(q,q^\star ,h)\right|^3
        \wp_h(q,q^\star ) e^{-V(q)} dq^\star dq\\
        & = \frac{1}{4h^{2\nu}}\int_{\mathbb{R}^m} \int_{\mathbb{R}^m} \left| \Delta(q,q^\star ,h)\right|^3
        e^{-V(q^\star )} \wp_h(q^\star ,q)  dq^\star dq\\
        &\qquad\qquad +\frac{1}{4h^{2\nu}}\int_{\mathbb{R}^m} \int_{\mathbb{R}^m} \left| \Delta(q,q^\star ,h)\right|^3
        \wp_h(q,q^\star ) e^{-V(q)} dq^\star dq.
    \end{align*}
    Swapping \(q\) and \(q^\star\) in the last but one integral and invoking \eqref{eq:oppositeqstar} once more, we can write this as
    \begin{align*}
        \frac{1}{h^{2\nu}}\left|\mu(h) - \frac{1}{2}s(h)\right| \leq & \frac{h^\nu}{2}\int_{\mathbb{R}^m} \int_{\mathbb{R}^m} \frac{\left| \Delta(q,q^\star ,h)\right|^3}{h^{3\nu}} e^{-V(q)} \wp_h(q,q^\star )  dq^\star dq.
    \end{align*}
    By the assumption the integral is finite and bounded in $h$, therefore letting $h\to 0$
     \begin{equation}
        \lim_{h\to 0}\frac{1}{h^{2\nu}}\left|\mu(h) - \frac{1}{2}s(h)\right| =0,
     \end{equation}
    and the result follows.
}

\section{Outlook: high order algorithms for better scaling }\label{sec:outlook}

In Section~\ref{sec:framework} we presented three volume preserving, reversible integrators that generate MCMC algorithms. The associated values of \(\nu\) were 1, 3, 3. We now discuss alternative integrators with higher values of \(\nu\) and therefore better scaling properties.

Many symplectic, reversible integrators for Hamiltonian problems have been suggested in the literature \cite{sanz2018numerical}. Since they are automatically volume preserving, all of them may be used as described in Section~\ref{sec:framework}. The class of reversible \emph{splitting} integrators \cite{BRSS18} is of particular interest. Those algorithms are simple, easy to implement generalizations of the Verlet scheme that use several evaluations of \(\nabla V\) to generate a single proposal. As an illustration \typo{consider} as in \cite{CSS17} the family of splitting integrators \((q^\star,p^\star) = \psi_h(q,p)\) described by the equations
\begin{eqnarray*}
p^{(1)} &=& p- h (1/2-b)\nabla V(q),\\
q^{(1)} &=& q+ h a p^{(1)},\\
p^{(2)} &=& p^{(1)}- h b\nabla V(q^{(1)}),\\
q^{(2)} &=& q^{(1)}+ h (1-2a)p^{(2)},\\
p^{(3)} &=& p^{(2)}- h b\nabla V(q^{(2)}),\\
q^\star &=& q^{(2)}+ h ap^{(3)},\\
p^\star &=& p^{(3)}- h (1/2-b)\nabla V(q^\star),
\end{eqnarray*}
where \(a\) and \(b\) are real parameters. On acceptance, the first evaluation of \(\nabla V\) to be used for the next proposal coincides with the last evaluation \(\nabla V(q^\star)\) in the current step. On rejection, the first evaluation of \(\nabla V\) to generate the next proposal coincides with the first evaluation \(\nabla V(q)\) in the current step. Thus each step of the Markov chain, except the very first, uses three gradient evaluations.
\begin{figure}
\centering
\includegraphics[scale = .6]{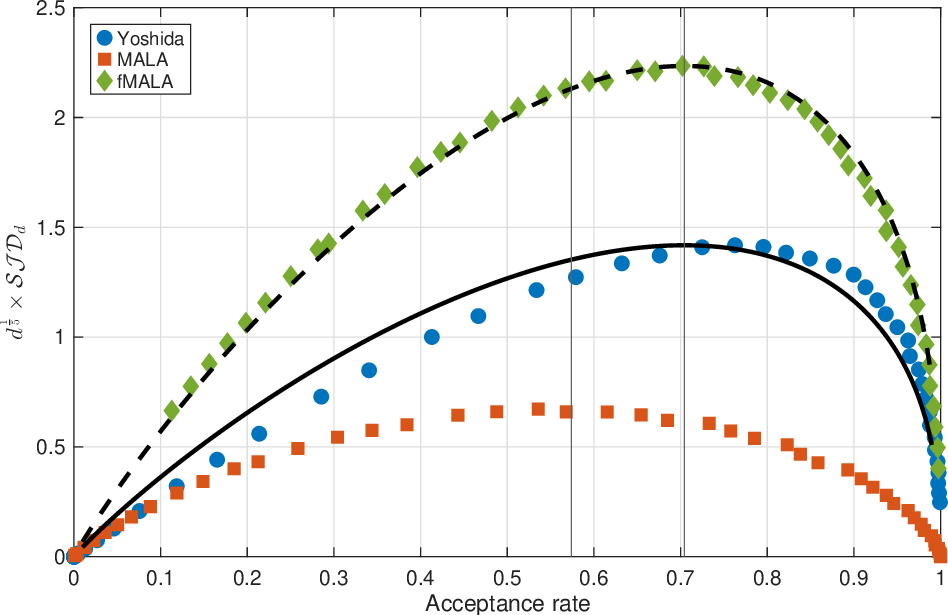}\\
\vspace{0.5 cm}
\includegraphics[scale = .6]{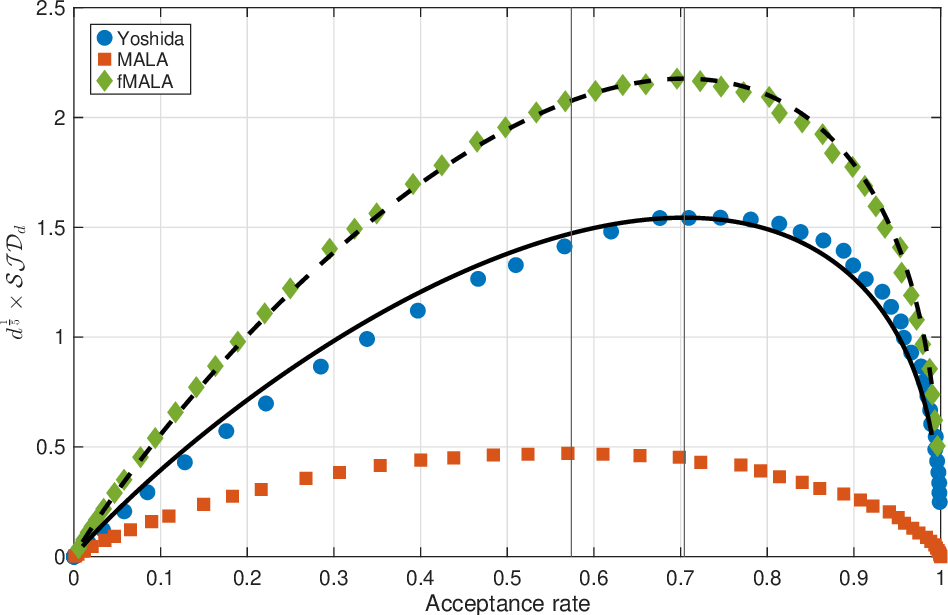}\\
\vspace{0.5 cm}
\includegraphics[scale = .6]{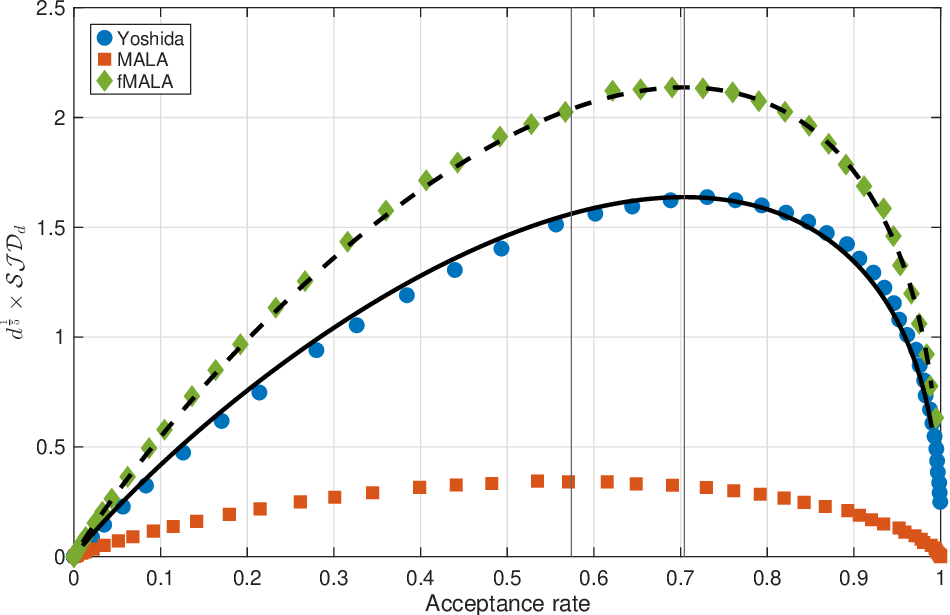}
\caption{Scaled Square Jumping Distance as a function of the acceptance rate, \(d= 10^{3}\) (top), \(d=10^{4}\) (middle), \(d=10^{5}\) (bottom)}
\label{figure}
\end{figure}
For the choice
\begin{equation*}
a = \frac{1}{2-\sqrt[3]{2}},\qquad b = \frac{1-a}2
\end{equation*}
sometimes attributed to Yoshida,
the integrator \rev{is} fourth order accurate with \(\nu= 5\). For the i.i.d.\ case with \(\lambda_i=1\), the scaling is \(h_d = \ell/ d^{1/10}\) or
\(\delta_d = \ell^2/d^{1/5}\), a marked improvement on \(\delta_d = \ell^2/d\) for RWM and \(\delta_d = \ell^2/d^{1/3}\) for MALA. By writing \(q^\star\) as a function of \(q\) and \(p\) and assuming \(p\sim N(0,I_m)\), it is easy to check that \typo{the Yoshida} algorithm provides an approximation to the Langevin dynamics of weak order exactly one: approximating to high order the Hamiltonian dynamics does not imply approximating to high order the Langevin dynamics.
\rev{In addition, this also implies that higher  order algorithms for the Langevin equations are not required for better scaling, thus answering negatively one of the conjectures in the discussion of \cite{GC11}. }
All other choices of \(a\) and \(b\) in the family above lead to integrators with \(\nu=3\) and some of them show very good performance in the context of HMC algorithms \cite{BCSS14,CSASS21}. It may be proved \cite{BRSS18} that by increasing the number of gradient evaluations per proposal it is possible to have splitting integrators which attain arbitrarily high values of \(\nu\) and therefore arbitrarily better scaling properties of the sampler. (The relation between a target order \(\nu\) and the required number of gradient evaluations is known, but not simple.)

It is also possible to reach arbitrarily high values of \(\nu\) by means of \emph{implicit} integrators that generalize the implicit midpoint rule used in MAIMLA.
The Gauss method of \(s\) stages \cite{sanz2018numerical}, \(s = 1,2,\dots\), achieves \(\nu=2s+1\). The implicit midpoint corresponds to \(s=1\). These integrators have the property of exactly conserving energy for quadratic potentials and therefore possess an acceptance rate of \(100\%\) for Gaussian distributions. When applied to the potential \eqref{eq:target} in \(\bbR^m\) each proposal requires the solution of an \(s\times m\) dimensional system of nonlinear equations.

A comparison between the merits of the different proposals is \typo{beyond the scope of this paper}. However we would like to illustrate the theory with a numerical example. We have considered as a target a product of \(d\) independent \(N(0,1)\) univariate distributions,  \(d= 10^{3}, 10^{4}, 10^{5}\), and sampled, when the chain is started at stationarity, with MALA \(\delta_d = \ell^2/d^{1/3}\), fMALA \(\delta_d = \ell^2/d^{1/5}\) and the Yoshida algorithm presented above, \(\delta_d = \ell^2/d^{1/5}\). Recall that fMALA involves much more complexity than MALA and the splitting algorithm. For each sampler, simulations were carried out with a range of values of \(\ell\) and we recorded the empirical \typo{Squared Jumping Distance} and acceptance probability. \typo{The data reported are averages} over $10^{4}$ realizations.
Figure~\ref{figure} displays \typo{the Squared Jumping Distance} \(\times\: d^{1/5}\) as a function of acceptance rate. According to our analysis, the \(d^{1/5}\) scaling
 is appropriate for fMALA and Yoshida. We see that indeed for those two algorithms the (scaled) jumping distance does not degrade as \(d\) increases. However, the MALA results move downwards as \(d\) increases, because for this algorithm a scaling \(d^{1/3}\) would be required. The figure also displays, for each sampler, the relation \typo{in equation} \eqref{eq:EA} between \(\mathcal{SJD}_d \) and \rev{the acceptance rate}; since the prefactor \typo{\(2^{2/\nu}/(K\Sigma)^{1/\nu}\)} was not known (but could have been computed as  the target is Gaussian), we scaled the graphs by matching the maximum of the empirical results to equal the maximum of the corresponding theoretical curve. Finally, vertical lines
\typo{indicate} the theoretical values 0.574 and 0.704 of the optimal acceptance rate for methods with \(\nu = 3\) (MALA) and \(\nu = 5\) (fMALA and Yoshida). Clearly those values, proved to be valid in the limit \(d\rightarrow \infty\), agree very well with the simulations.\bigskip

{\color{black}{\bf Acknowledgement.} The authors are thankful to Prof.\ G. Zanella for bringing to  their attention the work by Vogrinc and Kendall. JMS has been supported by Ministerio de Ciencia e Innovación (Spain) through project PID2022-136585NB-C21, MCIN/AEI/10.13039/501100011033/\\FEDER, UE.
}


\bibliographystyle{plain}
\bibliography{references}

\end{document}